\documentclass{amsart}
\usepackage[final]{hao}

\title[Continuum Limit for One-Dimensional Moir\'e Relaxation]{Continuum Limit of a One-Dimensional Atomistic Model for Moir\'e Relaxation}
\author{Hao Lan, Michael Hott, Jeff Calder, Alexander B. Watson}
\date{\today}

\makeatletter
\def\@settitle{%
  \begin{center}%
    \baselineskip14\p@\relax
    \bfseries
    \uppercasenonmath\@title
    \@title\par
    \ifx\@empty\@date\else
      \vskip 0.75em
      {\normalfont\normalsize \@date\par}%
    \fi
  \end{center}%
}
\def\@adminfootnotes{%
  \let\@makefnmark\relax  \let\@thefnmark\relax
  \ifx\@empty\@subjclass\else \@footnotetext{\@setsubjclass}\fi
  \ifx\@empty\@keywords\else \@footnotetext{\@setkeywords}\fi
  \ifx\@empty\thankses\else \@footnotetext{%
    \def\par{\let\par\@par}\@setthanks}\fi
}
\makeatother

\begin{document}



\begin{abstract}
  We establish convergence of minimizers for a minimal 
  one dimensional atomic-scale model for relaxation in moir\'e materials to those of a one-dimensional continuum limit functional analogous to those commonly used in practice to model relaxation. 
The atomic-scale model consists of two periodic chains containing $N$ and $N+1$ atoms per common cell and is exactly moir\'e-periodic at the atomic scale. Its energy combines harmonic nearest-neighbor intralayer interactions with nonlocal interlayer interactions generated by an even $C^2$ pair potential with polynomial decay. The associated continuum energy couples linear elasticity to a periodic misfit potential, obtained by periodizing the pair potential, which penalizes unfavorable local stackings. 

\end{abstract} 

\maketitle

\section{Introduction}
Moir\'e materials are formed by stacking two-dimensional crystalline materials with a small relative twist angle or lattice mismatch. The superposition of the lattice periodicities produces a periodic moir\'e pattern whose wavelength is much larger than the lattice constants of either layer. 
Moir\'e materials have emerged as ideal experimental platforms for investigating strongly correlated electronic phases. Notable examples include superconductivity in magic-angle twisted bilayer graphene \cite{Cao2018Superconductivity} and the fractional quantum anomalous Hall effect in twisted bilayer $\mathrm{MoTe}_2$ \cite{Park2023FQAH}.

Mechanical relaxation in bilayer moir\'e materials is the process by which atoms move from their monolayer equilibrium positions in order to minimize their total energy. The relaxed structure is determined by the competition between the elastic cost of deforming each layer and the dependence of the interlayer energy on local stacking. 
This competition tends to enlarge regions of energetically favorable stacking and, when the twist angle is small, localize the transitions between them to narrow domain walls \cite{CarrEtAl2018Relaxation,CazeauxEtAl2023DomainWalls,NamKoshino2017Relaxation}. Such features have been observed experimentally \cite{Yoo2019Reconstruction}. Together with calculations showing that relaxation can substantially modify the low-energy electronic band structure \cite{NamKoshino2017Relaxation}, these observations indicate that accurate prediction of moir\'e materials' electronic properties requires reliable modeling of the relaxed atomic structure.

The standard approach to determining relaxed structures has been to minimize functionals for continuum moir\'e-periodic displacements \cite{CarrEtAl2018Relaxation,CazeauxEtAl2023DomainWalls,CazeauxLuskinMassatt2020,Dai2016MoireTwist,NamKoshino2017Relaxation}.
These functionals are the sum of an intralayer elasticity energy and an interlayer misfit energy which penalizes unfavorable stacking configurations. Both phenomenological \cite{NamKoshino2017Relaxation} and DFT-parametrized misfit energies have been proposed \cite{CarrEtAl2018Relaxation}. Despite the widespread use of such continuum functionals, their rigorous justification from an underlying atomic-scale relaxation model has remained open. 
Providing a clear link between atomic-scale models and their continuum approximations recently became more important with the advent of machine-learned interatomic potentials for moir\'e materials \cite{Wang_2025}. 

The present work provides such a justification for a minimal
one-dimensional model. 
Specifically, we consider two periodic chains with common period $2\pi$, containing $N$ and $N+1$ atoms per period, respectively. The system is, thus, exactly moir\'e-periodic at the atomic scale. 
After rescaling, the atomistic energy consists of harmonic nearest-neighbor intralayer terms and an interlayer interaction generated by an even, sufficiently regular pair potential $V$ with polynomial decay. In the continuum limit, the interlayer interactions generate a periodic misfit potential by summing $V$ over lattice translates, while the intralayer interactions produce the continuum linear elastic energy.
The resulting continuum relaxation functional can be understood as the one-dimensional analogue of the continuum models used in \cite{CarrEtAl2018Relaxation,NamKoshino2017Relaxation}. Under natural smoothness and decay assumptions on $V$, and assuming the interlayer energy is small relative to the intralayer part (but of the same order with respect to $N^{-1}$), we prove that an atomistic minimizer $u_{d}$ converges to the corresponding sampled continuum minimizer $u_{c}$ in the discrete $H^2$ and maximum norms at rate $O\left(N^{-1}\right)$. This provides a quantitative justification of the continuum model for the present atomistic system.

The limit considered here is the same considered formally in the recent work \cite{jingzhi2025formaljustificationcontinuumrelaxation}, which uses physical values of parameters to argue this limit is realized in twisted bilayer graphene near to the magic angle $\approx 1^{\circ}$. The basic calculations of the present work can be adapted to show $\Gamma$-convergence between the atomic-scale and continuum energy functionals \cite{10.1093/acprof:oso/9780198507840.001.0001,BraidesGelli2006} and extended to prove similar results for analogous two-dimensional atomic-scale functionals describing layered graphene sheets. These calculations, as well as numerical verification of our results, will be the subject of forthcoming works.

Discrete-to-continuum limits for stacked layers with one layer fixed were introduced in \cite{PhysRevE.96.033003,Espanol}, and their $\Gamma$-convergence established in \cite{golovaty2025hierarchyscalesmodelingweakly}. The works \cite{Cazeaux2017,2016Cazeaux} analyzed relaxation of incommensurate coupled chains, mapping the system to a Frenkel-Kontorova model and deriving a Cauchy-Born strain energy density respectively. The work \cite{CazeauxLuskinMassatt2020} introduced a general framework for relaxing two-dimensional $n$-layer incommensurate stacks, but without proving convergence between atomic-scale and continuum functionals coupling elasticity to a local stacking energy. A novel approach to relaxation of moir\'e materials where the mechanical and electronic degrees of freedom are coupled was recently proposed in \cite{tu2026relaxationincommensuratestructuresquantum}. 

The paper is organized as follows. Section 2 analyzes the continuum energy, characterizing its symmetries, establishing existence and regularity of minimizers and proving a continuum $H^2$-stability estimate. 
Section 3 introduces and analyzes the discrete energy, including its symmetries, existence of minimizers, Euler--Lagrange equations, and a discrete $H^2$-stability estimate. 
Section 4 establishes consistency of the sampled continuum minimizer with the discrete Euler-Lagrange equations and then proves the discrete-to-continuum convergence theorem. The appendices contain the detailed derivation of the simplified atomistic model and proofs of the auxiliary estimates used in the discrete analysis. 

\subsection*{Acknowledgements}
AW acknowledges support from National Science Foundation Grant DMS-2406981. JC acknowledges support from an Albert and Dorothy Marden Professorship, a Simons Fellowship, and National Science Foundation Grant DMS-2436333. MH acknowledges support through the Simons Targeted Grant on Moir\'e Materials Magic. AW acknowledges stimulating conversations with Mitchell Luskin, Daniel Massatt, Paul Cazeaux, Dumitru Calugaru, Dmitry Golovaty, and Raghav Venkatraman. 


\section{Well-posedness of the continuum energy and properties of minimizers} \label{sec:cont_min}

We consider the following continuum relaxation functional
\begin{equation}\label{eq:E12}
E\left[u_1,u_2\right] = \inty{-\pi}{\pi}{\tfrac12 u_1'(x)^2+\tfrac12 u_2'(x)^2
+ \Phi\!\left(x+u_1(x)-u_2(x)\right)}{x},
\end{equation}
where $u_1, u_2 \in \Hp:=\left\{u \in H_{\mathrm{loc}}^1(\R): u(x+2 \pi)=u(x) \right\}$ represent the displacements of the two layers. Without loss of generality, we identify functions $u \in \Hp$ with their restriction to the fundamental cell $[-\pi,\pi]$ with periodic boundary conditions $u(\pi) = u(-\pi)$.
We can impose periodicity conditions pointwise since $H^1(-\pi, \pi) \hookrightarrow C([-\pi, \pi])$.

The potential $\Phi \in C^{1,1}(\R)$ which describes the interlayer misfit, is assumed to be even, $2\pi$--periodic and satisfies
\begin{equation}\label{eq:V_DFP}
s\mapsto \frac{\Phi'(s)}{s}\text{ strictly increasing on }[0,\pi].
\end{equation}
In particular, $\Phi$ is strictly decreasing on $(0,\pi)$ and attains its global minimum at $x=\pi$ modulo $2\pi$. 
A canonical continuum example is the cosine potential $\Phi(x) = \cos(x)$. 

It is convenient to introduce the variables
\begin{equation}\label{eq:uv_change}
u_+(x):=u_1(x)+u_2(x),
\qquad
v(x):=x+u_1(x)-u_2(x).
\end{equation}
Equivalently,
\[
u_1(x)=\tfrac12\bigl(u_+(x)+v(x)-x\bigr),
\qquad
u_2(x)=\tfrac12\bigl(u_+(x)-v(x)+x\bigr).
\]
A direct computation gives
\begin{equation}\label{eq:E_uv}
E\left[u_1,u_2\right]
=
\inty{-\pi}{\pi}{\frac14\,u_+'(x)^2}{x} + \inty{-\pi}{\pi}{\frac14\,\bigl(v'(x)-1\bigr)^2+\Phi\!\bigl(v(x)\bigr)}{x} .
\end{equation}
Since $u_1,u_2\in \Hp$, we have $u_+\in \Hp$ and $v \in x + \Hp$ defined by
\begin{equation}\label{eq:Hv}
x + \Hp = \left\{x + u(x) : u \in \Hp\right\}.
\end{equation}
Note that we have $v(x+2\pi) = v(x)+2\pi$ for $v\in x + \Hp$. Again, we can equivalently consider functions $v \in \H{1}$ with the boundary condition $v(\pi)-v(-\pi)=2\pi$. The first term of \eqref{eq:E_uv} is minimized by any constant $u_+ = C$. Thus, the minimization problem for $E\left[u_1,u_2\right]$ reduces entirely to the $v$ part. Define $W := 2\Phi$, we have that 
$$
E\left[u_1,u_2\right]=\frac14\inty{-\pi}{\pi}{u_+'(x)^2}{x}+\frac12J[v]-\frac{\pi}{2},
$$ 
where
\begin{equation}\label{eq:v}
J[v] := \inty{-\pi}{\pi}{\frac{1}{2}v'(x)^2 + W\!\bigl(v(x)\bigr)}{x}.
\end{equation}
Then minimizing \eqref{eq:E_uv} is equivalent to minimizing \eqref{eq:v} over $v\in x + \Hp$. First observe that $J$ depends only on $v'$ and on $W$, which are both $2\pi$--periodic. Therefore $J$ is invariant under vertical shifts by $2\pi\mathbb{Z}$:
\[
J[v+2\pi k]=J[v]\qquad\text{for all }v\in x+\Hp,\ k\in\mathbb{Z}.
\]
For $d\in\R$ define translation $(\tau_d v)(x):=v(x-d)$. By change of variables we obtain
\[
J[\tau_d v]=J[v].
\]
In particular, if $v_0$ is a global minimizer, then so is $\tau_d v_0+2\pi k$ for every $d\in\R$ and $k\in\mathbb{Z}$. We will prove the following theorem on minimizers of \eqref{eq:v} in $ x + H^1_{\mathrm{per}}$.
\begin{theorem}\label{thm:v}
There exists a unique global minimizer $v_0\in x + \Hp$ of \eqref{eq:v} satisfying $v_0(0)=0$.
All other minimizers have the form $v = \tau_d v_0 + 2\pi k$ for $d\in \R, k\in \Z$.
In other words,
\[
    \operatorname*{argmin}_{ x + \Hp} J \;=\; \big\{\, v_0(x - d) + 2\pi k : d \in \R, \; k \in \mathbb{Z} \,\big\}.
\]
Moreover, $v_0 \in C^{2,1}(\R)$ is odd and satisfies $v_0'>0$.
\end{theorem}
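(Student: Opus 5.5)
We will argue in four steps: existence by the direct method, regularity and a first integral from the Euler--Lagrange equation, monotonicity via a rearrangement argument, and uniqueness up to symmetries by phase-plane analysis of the first integral.

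\emph{Existence.} For $v=x+u$ with $u\in\Hp$ we have $\int_{-\pi}^{\pi} v'^2 = 2\pi+\int_{-\pi}^{\pi} u'^2$ and $W$ is bounded, so $J$ is coercive in $u'$. Using the vertical shift invariance, we normalize $u(0)\in[0,2\pi)$. A minimizing sequence is then bounded in $H^1(-\pi,\pi)$, so a subsequence converges weakly in $H^1$ and uniformly. Weak lower semicontinuity of the Dirichlet term and uniform convergence in the $W$ term give a minimizer. The periodic boundary condition is preserved under uniform convergence.

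\emph{Regularity and first integral.} The Euler--Lagrange equation is $v''=W'(v)$ weakly, with $W'\in C^{0,1}$. Bootstrapping gives $v\in C^{2,1}$, with the periodicity $v(x+2\pi)=v(x)+2\pi$ holding on all of $\R$. Multiplying by $v'$ yields the conserved quantity $\tfrac12 v'^2-W(v)=c$ on $\R$.

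\emph{Monotonicity.} Suppose $v'(x_0)=0$ at some point. Then $W(v)\le W(v(x_0))$ everywhere, since $v'^2=2(W(v)-W(v(x_0)))\ge 0$. Because $v$ increases by $2\pi$ over a period, its range covers all values mod $2\pi$. Hence $v(x_0)$ sits at a global maximum of $W$, i.e.\ $v(x_0)\in 2\pi\Z$ by \eqref{eq:V_DFP}. Uniqueness for the ODE with Lipschitz right-hand side then forces $v\equiv v(x_0)$, contradicting $v(\pi)-v(-\pi)=2\pi$. Thus $v'$ never vanishes, and since the average of $v'$ is $1$, we get $v'>0$, with $c>-W(0)$, i.e.\ $c>-\max W$.

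\emph{Uniqueness up to symmetries.} With $v'>0$ we have $v'=\sqrt{2(c+W(v))}$. Invert to get
\[
x(v)=\int_0^v \frac{ds}{\sqrt{2(c+W(s))}}
\]
after translating and shifting so that $v(0)=0$. The constraint $v(\pi)-v(-\pi)=2\pi$ becomes
\[
T(c):=\int_0^{2\pi}\frac{ds}{\sqrt{2(c+W(s))}}=2\pi.
\]
The function $T$ is strictly decreasing and continuous on $(-\max W,\infty)$, tends to $0$ as $c\to\infty$, and tends to $+\infty$ as $c\downarrow -\max W$. The last limit holds because $W(s)\le W(0)-Ks^2$ near $0$, which follows from $\Phi\in C^{1,1}$ together with $\Phi$ being even and decreasing. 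Hence there is exactly one $c$, and all critical points of the admissible monotone form coincide with $\tau_d v_0+2\pi k$. This applies in particular to all minimizers, by the previous step.

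\emph{Oddness.} The function $-v_0(-x)$ is also a solution with value $0$ at $0$ and the same $c$. By uniqueness of the profile it equals $v_0$; alternatively, oddness follows directly because $W$ is even, so the integrand defining $x(v)$ is even in $s$.

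The main obstacle is the monotonicity step, since the argument must exclude non-monotone critical points carefully. If the first-integral argument has gaps (e.g.\ the case where the range of $v$ does not hit the maximum of $W$ mod $2\pi$), a fallback is to show that a minimizer can be replaced by its monotone rearrangement without increasing $J$. That replacement, combined with strict decrease, would again force $v'>0$.
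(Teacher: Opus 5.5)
Your route differs from the paper's and can be made to work. However, one sign is inverted throughout the monotonicity and time-map steps, and as written the time-map step is false.

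\textbf{The sign error.} From $\tfrac12 v'^2-W(v)=c$ and $v'(x_0)=0$ you get $v'^2=2\bigl(W(v)-W(v(x_0))\bigr)\ge 0$. That means $W(v(x))\ge W(v(x_0))$, so $v(x_0)$ is a global \emph{minimum} of $W$ (at $\pi$ mod $2\pi$), not a maximum. The contradiction survives: a global minimum of the $C^1$ function $W$ is still a zero of $W'$, so the constant $v(x_0)$ is an equilibrium with the same Cauchy data. Its location, and hence \eqref{eq:V_DFP}, plays no role here.

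\textbf{Where it propagates.} Since $c+W(s)=\tfrac12 v'^2>0$ must hold at every $s\in v(\mathbb{R})=\mathbb{R}$, the correct constraint is $c>-\min W=-W(\pi)$, not merely $c>-\max W$. For $c\in(-W(0),-W(\pi)]$ the radicand vanishes or is negative at $s=\pi$, so $T$ is not defined on the interval you claim. The singularity of $T$ as $c\downarrow-W(\pi)$ sits at $s=\pi$. The estimate that gives the (logarithmic) blow-up is $W(s)-W(\pi)\le\tfrac12\operatorname{Lip}(W')(s-\pi)^2$, which does follow from $W\in C^{1,1}$ and $W'(\pi)=0$. Your bound $W(s)\le W(0)-Ks^2$ is at the wrong point. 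It is also not implied by $C^{1,1}$, evenness and monotonicity alone; a quartic maximum is a counterexample.

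\textbf{The blow-up is not needed.} The direct method already supplies a monotone solution, hence an admissible $c$ with $T(c)=2\pi$. The strict decrease of $T$ on $(-W(\pi),\infty)$ is immediate and makes that $c$ unique. The fallback in your last paragraph is also unnecessary, since $v(\mathbb{R})=\mathbb{R}$ always.

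\textbf{Comparison with the paper.} With these corrections your proof is complete. The paper proceeds as follows. It normalizes to $v_0(0)=0$ and proves oddness by ODE uniqueness for $-v_0(-x)$, so $v_0(\pm\pi)=\pm\pi$. It gets $v_0'>0$ from the first integral anchored at $x=\pi$, using that $W$ is minimal at $\pi$. It proves uniqueness by a Sturm-type comparison for the boundary value problem $v''=W'(v)$, $v(0)=0$, $v(\pi)=\pi$. There, integrating by parts on the maximal interval $(a,\pi)$ where two solutions are ordered, and using that $\Phi'(s)/s$ is strictly increasing, yields a sign contradiction.

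Your time-map argument instead fixes the energy level through the monotonicity of $T$, which holds for any periodic $W$. It then uses that solutions of the autonomous equation $v'=\sqrt{2(c+W(v))}$, whose right-hand side is positive and Lipschitz, are translates of one another. This buys generality: neither your monotonicity step nor your uniqueness step uses \eqref{eq:V_DFP}. So Theorem \ref{thm:v} holds for every even $2\pi$-periodic $W\in C^{1,1}$, with evenness used only for oddness. The paper's comparison argument pays for \eqref{eq:V_DFP}, but it does not depend on reducing the problem to a quadrature. That makes it closer in spirit to arguments that survive outside the scalar autonomous ODE setting.
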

The proof of Theorem \ref{thm:v} requires several ingredients which are collected below. First, the existence and regularity of a minimizer is a standard application of the direct method in the calculus of variations. We establish monotonicity in Proposition \ref{lem:monotone} and uniqueness in Proposition \ref{lem:uniqueness}; the arguments are more involved but standard \cite{Dang1992}. Finally, we establish stability of the Euler--Lagrange equation in $H^2$ norm in Proposition \ref{lem:stability}.
\begin{proof}[Proof of Theorem \ref{thm:v}]
For the existence, let $v_n= x + u_n \in x+H^1_{\mathrm{per}}(-\pi,\pi)$ be a minimizing sequence. Using the symmetry $J[v+2\pi k]=J[v]$, we can arrange it so that the means $(u_n)$ are uniformly bounded. Now applying the Poincar\'e-Wirtinger  
inequality yields a uniform $H^1$ bound on $u_n$. Following the direct method as in \cite[\S3.4.1, Thm. 3.30]{dacorogna_direct}, $\inf_{x+H^1_{\mathrm{per}}}J$ attains its minimum.

For the regularity, we claim that every global minimizer $v$ of $J$ on $x+\Hp$ is a $C^{2,1}$ strong solution of the Euler--Lagrange equation
\begin{equation}\label{eq:EL_v}
 v''(x)=W'\bigl( v(x)\bigr),
\qquad
 v(x+2\pi)= v(x) + 2\pi.
\end{equation}
Indeed, by Evans' Euler--Lagrange theorem for minimizers \cite[Ch. 8, Thm. 4]{evans2010partial} any minimizer $v$ satisfies \eqref{eq:EL_v} in the weak sense. Following the standard ODE theory, we can bootstrap directly to obtain $ v\in C^{2,1}(\R)$ since $W\in C^{1,1}(\R)$.

Now let \(v\in x+\Hp\) be any global minimizer. By the regularity result, $v\in C^{2,1}(\R)$, 
the endpoint values \(v(\pm\pi)\) are well-defined. Choose \(k\in\mathbb Z\) such that
\[
v(-\pi)-2\pi k \le 0 < v(\pi)-2\pi k.
\]
By continuity, there exists \(x_0\in[-\pi,\pi]\) with \(v(x_0)-2\pi k=0\). Define
\[
v_0 := \tau_{-x_0}v - 2\pi k,
\qquad\text{so that}\qquad
v_0(0)=0.
\]
By the symmetry of \(J\), \(v_0\) is again a
global minimizer in \(x+H^1_{\mathrm{per}}\) satisfying \eqref{eq:EL_v}. Now define \(w(x):=-v_0(-x)\). A direct computation gives
\[
w''(x)=-v_0''(-x)
      =-W'(v_0(-x))
      =W'(-v_0(-x))
      =W'(w(x)),
\]
so \(w\) satisfies the same ODE as \(v_0\) on \((-\pi,\pi)\). Moreover,
\[
w(0)=0=v_0(0),\qquad w'(0)=v_0'(0).
\]
The uniqueness for the ODE initial value problem gives \(w = v_0\) on \([-\pi,\pi]\). Therefore \(v_0\) is odd. Combining oddness with
\(v_0(\pi)-v_0(-\pi)=2\pi\), we obtain
\[
v_0(\pi)=\pi,\qquad v_0(-\pi)=-\pi.
\]
Thus every global minimizer is related by symmetries of the functional to an odd global minimizer satisfying $v_0(0)=0$ and $v_0(\pm\pi)=\pm\pi$. Monotonicity and uniqueness of $v_0$ follow from Propositions \ref{lem:monotone} and \ref{lem:uniqueness} below.
\end{proof}
\begin{proposition}[Monotonicity]\label{lem:monotone}
    Let $v \in x+\Hp$ be an odd function that minimizes \eqref{eq:v} and satisfies $v(0) = 0$. Then $v'(x) > 0$ for $- \pi \leq x \leq \pi$.
\end{proposition}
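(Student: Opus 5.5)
Our plan is a rearrangement/cut-and-paste argument to get $v'\ge 0$, followed by ODE uniqueness to make the inequality strict.

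\textbf{Step 1: monotone competitor.} First we show $v'\ge 0$ on $[-\pi,\pi]$. Since $v$ is odd and $v(\pm\pi)=\pm\pi$ (established in the proof of Theorem \ref{thm:v}), it suffices to work on $[0,\pi]$, where $v$ goes from $0$ to $\pi$. On this interval the potential $W=2\Phi$ is strictly decreasing in the stacking variable on $(0,\pi)$, is even, and is $2\pi$-periodic. Given any $v$, we define the competitor $\tilde v(x):=\min\{\pi,\max_{0\le y\le x}\,\bar v(y)\}$ on $[0,\pi]$, where $\bar v=\min(\max(v,0),\pi)$ is the truncation to $[0,\pi]$, and extend $\tilde v$ oddly. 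We then check:
\begin{itemize}
\item[(i)] $\tilde v\in x+\Hp$, and $|\tilde v'|\le|v'|$ a.e., since truncation and running maxima do not increase the $H^1$ seminorm; a.e.\ $\tilde v'$ is either $0$ or equals $v'$.
\item[(ii)] $W(\tilde v)\le W(v)$ pointwise. Truncation to $[0,\pi]$ cannot increase $W$: folding $v$ by evenness and periodicity into $[0,\pi]$, values outside are mapped to values with $W$ no smaller. The running maximum only raises values within $[0,\pi]$, where $W$ is decreasing.
\end{itemize}
Hence $J[\tilde v]\le J[v]$. The delicate point is that truncation must be done so that (ii) genuinely holds, i.e.\ one really needs $W(\bar v(x))\le W(v(x))$. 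Rather than a naive clamp, we will use the folding map $s\mapsto \operatorname{dist}(s,2\pi\mathbb Z)\in[0,\pi]$, which is $1$-Lipschitz and leaves $W$ invariant.

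\textbf{Step 2: $v'\ge 0$.} Since $v$ is a minimizer, equality $J[\tilde v]=J[v]$ holds, so $\tilde v$ is also a minimizer. Therefore $\tilde v$ is a $C^{2,1}$ solution of \eqref{eq:EL_v}. If $v$ were not nondecreasing on $[0,\pi]$, then $\tilde v$ would be constant equal to some $c$ on an interval. The equation forces $W'(c)=0$, so $c\in\{0,\pi\}$ modulo $2\pi$. ODE uniqueness (with $W'$ Lipschitz) then makes $\tilde v\equiv c$ globally, contradicting $\tilde v(0)=0$, $\tilde v(\pi)=\pi$. Thus $\tilde v$ has no flat pieces, so $\tilde v$ is strictly increasing. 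Equality in the Dirichlet part, together with the strictness of $W$-decrease, then forces $v=\tilde v$: wherever $v<\tilde v$, strict decrease gives $W(\tilde v)<W(v)$ on a set of positive measure unless that set is null. Continuity handles the rest, giving $v'\ge0$.

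\textbf{Step 3: strictness.} Suppose $v'(x_1)=0$ at some point. Since $v'\ge0$ is $C^{1}$, $x_1$ is a minimum of $v'$, so $v''(x_1)=W'(v(x_1))=0$. This forces $v(x_1)\in\pi\mathbb Z$, using that $W'(s)/s$ is strictly increasing, so $\Phi'$ vanishes on $[0,\pi]$ only at $0$ and $\pi$. Then $(v(x_1),v'(x_1))$ is an equilibrium of the ODE, and uniqueness of the initial value problem gives $v\equiv v(x_1)$. This contradicts $v(\pi)-v(-\pi)=2\pi$. Hence $v'>0$ on $[-\pi,\pi]$. Step 3 is routine; the main obstacle is making the competitor construction in Step 1 rigorous, in particular the pointwise inequality for $W$ and the preservation of the boundary conditions.
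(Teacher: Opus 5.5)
Your proof is correct in substance, but it takes a different route from the paper. The paper builds no competitor. It multiplies $v''=W'(v)$ by $v'$ and integrates from $x$ to $\pi$, which gives the first integral $v'(x)^2=v'(\pi)^2+2\bigl(W(v(x))-W(\pi)\bigr)\ge v'(\pi)^2$. This uses only that $v(\pi)=\pi$ is a global minimum point of $W$. ODE uniqueness at the Cauchy data $(\pi,0)$ then excludes $v'(\pi)=0$, so $v'$ never vanishes, and $v(\pi)-v(-\pi)=2\pi$ fixes the sign.

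That proof uses minimality only through the Euler--Lagrange equation and the boundary values. It needs no $H^1$ calculus of foldings or running maxima, and it yields the quantitative bound $v'\ge v'(\pi)>0$, which the paper uses later. Your route uses minimality essentially: the running maximum of the folded profile serves as a competitor to obtain $v'\ge 0$, and only then do you apply an ODE argument at a minimum of $v'$. It is longer, but its first step does not depend on the conserved quantity, which is special to the autonomous one-dimensional equation.

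When you write it up, tighten three points.

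\textbf{The folding is essential.} The naive clamp $\min(\max(v,0),\pi)$ genuinely violates (ii): a value $v(x)=-1/2$ is sent to the maximum point $0$ of $W$. So you need $\bar v=\operatorname{dist}(v,2\pi\mathbb Z)$, and then the outer $\min\{\pi,\cdot\}$ is redundant.

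\textbf{The strict-inequality step is misapplied.} The sentence ``wherever $v<\tilde v$, $W(\tilde v)<W(v)$'' is false when $v\notin[0,\pi]$. Apply it to $\bar v$ instead, which gives $\bar v=\tilde v$; equivalently, a running maximum with no flat pieces must coincide with $\bar v$. You then need one more observation: since $\bar v$ is strictly increasing from $0$ to $\pi$, $v$ stays off $\pi\mathbb Z$ on $(0,\pi)$. Together with $v(0)=0$ and $v(\pi)=\pi$, this forces $v=\bar v$. The same observation is what justifies your claim that ``$v$ not nondecreasing implies $\tilde v$ has a flat piece.''

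\textbf{The endpoint case in Step 3.} If $x_1=\pm\pi$, use the $2\pi$-periodicity of $v'$ on $\mathbb R$. Then $x_1$ is always an interior minimum of $v'\ge 0$, so $v''(x_1)=0$ is justified.
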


\begin{proof}
Since $W \in C^{1,1}(\R)$ is even,  $2\pi$--periodic with $W'(x)<0$ for $x\in(0,\pi)$, we have
$W'(\pi)=0$ and $W(\pi)\le W(x)$ for all $x\in\R$. As a minimizer, $v$ satisfies \eqref{eq:EL_v} with the boundary conditions $v(\pm\pi)=\pm\pi$. Multiplying the ODE by $v'$, we obtain
\[
\frac{1}{2}\frac{\text{d}}{\text{d}x}v'(x)^2 = v''(x)v'(x) = W'(v(x))v'(x) = \frac{\text{d}}{\text{d}x}W(v(x)) \ \ x\in (-\pi,\pi)
\]
Integrating both sides from $x$ to $\pi$, we get that
\[
v'(x)^2 = v'(\pi)^2 + 2\bigl(W(v(x)) - W(\pi)\bigr)\ \ge\ v'(\pi)^2
\qquad \forall x\in[-\pi,\pi],
\]
where the inequality uses $W(v(x))\ge W(\pi)$. 

If $v'(\pi)=0$, then the constant function $x\mapsto \pi$ solves $v''=W'(v)$ and has the same
Cauchy data $(v(\pi),v'(\pi))=(\pi,0)$; by uniqueness for the ODE, this would force
$v\equiv \pi$, contradicting $v(0)=0$. Thus $v'(\pi)\neq 0$, and therefore
$v'(x)^2\ge v'(\pi)^2>0$ on $[-\pi,\pi]$. In particular, $v'$ never vanishes and hence has a fixed sign. Since $v(\pi)-v(-\pi)=2\pi>0$, this sign must be positive, and so $v'(x)>0$ for all $x\in[-\pi,\pi]$. 
\end{proof}

\begin{proposition}[Uniqueness]\label{lem:uniqueness}
    There exists a unique minimizer $v$ satisfying $v(0) = 0$. 
\end{proposition}
\begin{proof}
Let $v_1$ and $v_2$ be two global minimizers of $J$ in $x+\Hp$
such that $v_{\hi}(0)=0$, satisfying \eqref{eq:EL_v} with the boundary conditions $v_{\hi}(\pm\pi)=\pm\pi$ for ${\hi}=1,2$. Proposition \ref{lem:monotone} yields
\begin{equation}\label{eq:mono_vi}
    v_{\hi}'(x)>0 \qquad \text{for all } x\in[-\pi,\pi].
\end{equation}
In particular, restricting to $[0,\pi]$, each $v_{\hi}$ satisfies
\begin{equation}\label{eq:BVP_vi}
    v'' = W'(v),\qquad v(0)=0,\quad v(\pi)=\pi,\quad v'(x)>0 \text{ on }(0,\pi).
\end{equation}
We claim that \eqref{eq:BVP_vi} has at most one solution.

Note that if $v_1'(\pi)=v_2'(\pi)$, then $(v_1(\pi),v_1'(\pi))=(v_2(\pi),v_2'(\pi))=(\pi,v_1'(\pi))$, and uniqueness for the ODE initial value problem implies $v_1\equiv v_2$ on $[0,\pi]$. Thus we assume $v_1'(\pi)\neq v_2'(\pi)$ and, without loss of generality, $v_1'(\pi)<v_2'(\pi)$.

Since $v_1(\pi)=v_2(\pi)=\pi$ and both are increasing, there exists $\delta>0$ such that
$v_1(x)>v_2(x)$ for all $x\in(\pi-\delta,\pi)$.
Let $(a,\pi)$ be the largest interval on which $v_1>v_2$, where $a\in[0,\pi)$.
Then $v_1(a)=v_2(a)=:b\in[0,\pi)$, $v_1>v_2$ on $(a,\pi)$ and $v_1'(a)\ge v_2'(a)$. Using $v_{\hi}''=W'(v_{\hi})$, we have:
$$
\int_a^\pi \bigl(v_1'v_2' + v_1 W'(v_2)\bigr)\,\text{d}x
= \int_a^\pi \frac{\text{d}}{\text{d}x}\bigl(v_1 v_2'\bigr)\,\text{d}x
= \pi v_2'(\pi) - b v_2'(a).
$$
Swapping $v_1$ and $v_2$ above and subtracting yields
\begin{align*}
\int_a^\pi \bigl(v_2 W'(v_1) - v_1 W'(v_2)\bigr)\,\text{d}x
&= \pi\bigl(v_1'(\pi)-v_2'(\pi)\bigr) - b\bigl(v_1'(a)-v_2'(a)\bigr) \;<\;0.
\end{align*}
On the other hand, for $x\in[a,\pi]$, $v_1(x)\ge v_2(x)\ge 0$, using \eqref{eq:V_DFP} we have
\[
v_2 W'(v_1) - v_1 W'(v_2)
= v_1 v_2\left(\frac{W'(v_1)}{v_1}-\frac{W'(v_2)}{v_2}\right)\ge 0
\qquad \text{for all } x\in[a,\pi],
\]
a contradiction. Therefore $v_1\equiv v_2$ on $[0,\pi]$.

Finally, since both minimizers are odd, equality on $[0,\pi]$ implies equality on $[-\pi,0]$.
Thus $v_1\equiv v_2$ on $[-\pi,\pi]$, proving the uniqueness of the minimizer with $v(0)=0$.
\end{proof}

In addition to Theorem \ref{thm:v}, we claim the Euler--Lagrange equation \eqref{eq:EL_v} is stable in $H^2$ in the sense of the following Proposition \ref{lem:stability}. Note that here the  stability requires the nonlinearity (i.e., the potential) to be small, while all the previous results up to here hold in general. Specifically, we require the Lipschitz constant of $W'$ to be less than $1$, which is the Poincar\'e constant for periodic functions on \((-\pi,\pi)\).
\begin{proposition}[Stability]\label{lem:stability}
Assume that $W \in C^{1,1}(\R)$ with Lipschitz constant $L:=\Lip(W')<1$. Let $v,w\in x+H^2_{\mathrm{per}}$ and $r\in L^2(-\pi,\pi)$ satisfy
\[
v'' = W'(v), \qquad w'' = W'(w) + r \quad \text{a.e.\ on }(-\pi,\pi).
\]
Set $g:=v-w$ and denote the mean by
\[
(g):=\frac1{2\pi}\int_{-\pi}^{\pi} g(x)\,dx.
\]
Then there exists a constant $C=C(L)>0$ such that
\[
\|g\|_{H^2(-\pi,\pi)} \le C\bigl(\|r\|_{L^2(-\pi,\pi)} + |(g)|\bigr).
\]
\end{proposition}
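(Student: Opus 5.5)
Subtracting the two equations gives the linear equation
\[
g'' = W'(v) - W'(w) - r \quad\text{a.e. on }(-\pi,\pi),
\]
where $g=v-w\in \Hp\cap H^2_{\mathrm{per}}$ is genuinely $2\pi$-periodic, since the linear parts $x$ cancel. The plan is an energy estimate. Test the equation against $g-(g)$, bound the nonlinear term by the Lipschitz constant $L$, and absorb it using the sharp Poincar\'e--Wirtinger inequality for periodic functions. This controls $\|g'\|_{L^2}$, and the equation itself then controls $\|g''\|_{L^2}$.

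First, set $\tilde g:=g-(g)$. Multiply the equation by $\tilde g$ and integrate over $(-\pi,\pi)$. Integration by parts has no boundary terms because $g'$ and $\tilde g$ are periodic, and $\int g'\,\tilde g'=\int (g')^2$. This gives
\[
\|g'\|_{L^2}^2 = -\int_{-\pi}^{\pi}\bigl(W'(v)-W'(w)\bigr)\tilde g\,dx + \int_{-\pi}^{\pi} r\,\tilde g\,dx .
\]
Since $|W'(v)-W'(w)|\le L|g|\le L|\tilde g| + L|(g)|$ pointwise, the first term is at most $L\|\tilde g\|_{L^2}^2 + L|(g)|\,\|\tilde g\|_{L^1}$. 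The sharp Wirtinger inequality $\|\tilde g\|_{L^2}\le\|g'\|_{L^2}$ (constant $1$, first nonzero Fourier mode) and Cauchy--Schwarz then give
\[
(1-L)\|g'\|_{L^2}^2 \le \bigl(L\sqrt{2\pi}\,|(g)| + \|r\|_{L^2}\bigr)\|g'\|_{L^2}.
\]
Hence $\|g'\|_{L^2}\le (1-L)^{-1}\bigl(\|r\|_{L^2}+C|(g)|\bigr)$.

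Next, the $L^2$ norm follows from $\|g\|_{L^2}\le \|\tilde g\|_{L^2} + \sqrt{2\pi}|(g)| \le \|g'\|_{L^2}+\sqrt{2\pi}|(g)|$. For the second derivative, read off the equation directly:
\[
\|g''\|_{L^2}\le L\|g\|_{L^2}+\|r\|_{L^2}.
\]
Summing the three bounds yields $\|g\|_{H^2}\le C(L)\bigl(\|r\|_{L^2}+|(g)|\bigr)$ with $C(L)\sim (1-L)^{-1}$.

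The main obstacle is the absorption step. It requires $L$ to be strictly below the optimal Poincar\'e constant $1$, which explains the hypothesis. The key point is to bound the nonlinear term against $\tilde g$ rather than $g$, so the mean appears only as a forcing term; otherwise the Wirtinger inequality would not apply. The remaining steps need only periodicity of $g$ and the fact that $H^2_{\mathrm{per}}$ regularity justifies the integration by parts.
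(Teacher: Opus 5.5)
Your proof is correct, but it is organized differently from the paper's. The paper never tests the equation. It takes $L^2$ norms of $g''=W'(v)-W'(w)-r$ directly and chains Poincar\'e--Wirtinger twice, $\|g-(g)\|_{L^2}\le\|g'\|_{L^2}\le\|g''\|_{L^2}$, using that $g'$ has zero mean. It then absorbs $L\|g\|_{L^2}\le L\|g''\|_{L^2}+L\sqrt{2\pi}\,|(g)|$ at the level of $\|g''\|_{L^2}$, after which the lower-order norms are dominated by $\|g''\|_{L^2}$. You instead absorb at the level of $\|g'\|_{L^2}$, via the energy identity obtained by testing with $g-(g)$. You apply Wirtinger only once and read off $g''$ from the equation at the end.

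Both routes need exactly $L<1$ and give identical constants, namely $\|g'\|_{L^2},\|g''\|_{L^2}\le(1-L)^{-1}\bigl(\|r\|_{L^2}+L\sqrt{2\pi}\,|(g)|\bigr)$. The paper's norm chain avoids integration by parts, and it is the template the paper reuses for the discrete estimate in Theorem \ref{thm:discrete_H2_stability}.

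Your energy version is more flexible. The first-order bound uses only the weak form of the equation. Moreover, if you test with $g$ rather than $g-(g)$, using $\|g\|_{L^2}^2=\|g-(g)\|_{L^2}^2+2\pi(g)^2$, the absorption step needs only the one-sided condition $(W'(a)-W'(b))(a-b)\ge -L(a-b)^2$, i.e.\ $W''\ge -L$ a.e. The full Lipschitz constant of $W'$ then enters only when you recover $\|g''\|_{L^2}$.
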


\begin{proof}
Note that $g''=W'(v)-W'(w)-r$ and $g=v-w\in H^2_\mathrm{per}$. The function \(g-(g)\) is periodic
with zero mean and \(g'\) has zero mean by periodicity. Applying Poincaré--Wirtinger inequality to $g-\left( g \right)$ and then to $g'$ on \([-\pi,\pi]\), where the Poincaré constant is $1$, then 
$$
\|g-\left( g \right)\|_{L^2} \leq\left\|g'\right\|_{L^2} \leq\left\|g''\right\|_{L^2},
$$
and
$$
\left\|g''\right\|_{L^2} \leq\left\|W'(v)-W'(w)\right\|_{L^2}+\|r\|_{L^2} \leq L\|g\|_{L^2}+\|r\|_{L^2} .
$$
Using the triangle inequality
$$
\|g\|_{L^2} \leq\|g-\left( g \right)\|_{L^2}+\sqrt{2 \pi}|\left( g \right)| \leq \left\|g''\right\|_{L^2}+\sqrt{2 \pi}|\left( g \right)|.
$$
Combining the above estimates, we get
$$
\|g''\|_{L^2} \leq \frac{L \sqrt{2 \pi}}{1-L}|\left( g \right)|+\frac{1}{1-L}\|r\|_{L^2}
$$
The same bound holds for $\left\|g-(g)\right\|_{L^2}$ and $\left\|g'\right\|_{L^2}$, so
$$
\|g\|_{H^2(-\pi, \pi)} \leq C(L)\left(\|r\|_{L^2}+|\left( g \right)|\right).
$$
\end{proof}

\section{Well-posedness of the discrete energy}

We consider a minimal atomic-scale model for relaxation in moir\'e materials: two one-dimensional chains of atoms with a small lattice mismatch analogous to a two-dimensional twist, where the atoms interact via pair potentials.

We consider the case where the system has an exact periodic cell given by $[0,2\pi)$. Fix an integer $N \geq 2$, we denote the number of atoms in layer ${\hi}$ within this cell by $N_{\hi}$ with $N_1 = N$ and $N_2 = N+1$. The lattice constants are thus $h_{\hi} = \frac{2\pi}{N_{\hi}}$, and the lattice mismatch $\theta$ is defined by
$$
h_2 = (1 - \theta) h_1 \implies \theta = \frac{1}{N+1}.
$$
For \(\xi=(\xi^1,\xi^2)\), let \(\xi_n^{\hi}\) denote the displacement of the \(n\)-th atom in layer \({\hi}\) from its reference position \(n h_{\hi}\). We define the atomic-scale energy by
\begin{equation} \label{eq:sum_E}
\mathcal{E}_N[\xi]:=\mathcal{E}^1[\xi]+\mathcal{E}^2[\xi],
\end{equation}
where \(V_{\hi,\hi}\) and \(V_{\hi,\hj}\), \(\hi \in\{1,2\}\), are decaying pair potentials and
\begin{equation}\label{eq:energy_raw}
\begin{split}
&\mathcal{E}^{\hi}[\xi]
:= \\
&h_{\hi} \sum_{n=0}^{N_{\hi}-1}\sum_{m\in\mathbb Z}
\left[
V_{\hi, \hi}\bigl(nh_{\hi}+\xi_n^{\hi}-mh_{\hi}-\xi_m^{\hi}\bigr)
+V_{\hi,\hj}\bigl(nh_{\hi}+\xi_n^{\hi}-mh_{\hj}-\xi_m^{\hj}\bigr)
\right].
\end{split}
\end{equation}
Note that the arguments of the potentials are the relative atomic positions, including both reference separation and displacement, and that the infinite sum over \(m\) accounts for the periodicity of the chains. Throughout, any discrete pair \(u=(u^1,u^2)\) is understood to satisfy the periodicity condition
\[
u_{n+N_{\hi}}^{\hi}=u_n^{\hi} .
\]

Rather than considering \eqref{eq:energy_raw} directly, we further simplify this model as follows. For layer ${\hi}$, define the forward and backward differences
$$
\fdi u_n^{\hi}:=\frac{u_{n+1}^{\hi}-u_n^{\hi}}{h_{\hi}}, \quad \bdi u_n^{\hi}:=\frac{u_n^{\hi}-u_{n-1}^{\hi}}{h_{\hi}},
$$
and the periodic discrete Laplacian
$$
\Delta_{\hi} u_n^{\hi}:=\bdi \fdi u_n^{\hi}=\frac{u_{n+1}^{\hi}-2 u_n^{\hi}+u_{n-1}^{\hi}}{h_{\hi}^2} .
$$Denote $\avsum_{n=0}^{N_{\hi}-1} := h_{\hi} \sum_{n=0}^{N_{\hi}-1}$ and take
$I_{\hi} = \{0, \ldots, N_{\hi}-1\}$, $h = \frac{2h_1h_2}{h_1+h_2}$. Then, after rescaling the displacement functions $\xi^i = \frac{h_i}{2 \pi} u^i$ and some further manipulations (see Appendix \ref{sec:simplified} for a detailed derivation), the atomistic energy \eqref{eq:sum_E}-\eqref{eq:energy_raw} can be simplified to
\[
    E_N[u] := E^1[u] + E^2[u],
\]
where
\begin{equation} \label{eq:energy_discrete}
E^{{\hi}}[u] := \frac{1}{2} \avsum_{n \in I_{\hi}} (\nabla_{\hi}^{+}u^{\hi}_n)^2 + \avsum_{n \in I_{\hi}}\sum_{m \in \mathbb{Z}}V\left(\Anm[u]\right), \quad {\hi} = 1,2,
\end{equation}
\begin{equation}\label{eq:argument}
    A^{\hi}_{n,m}\left[u\right] := (-1)^{{\hj}} h n + \frac{h}{h_{\hj}} u_{n}^{\hi} - \frac{h}{h_{\hi}} u_{n+m}^{\hj} - 2 \pi \frac{h}{h_{\hi}} m , \quad n \in I_{\hi}.
\end{equation}
We equip our system with the layerwise weighted $\ell^2$ norm:
$$
\left\|u^{\hi}\right\|_{\ell^2_{{\hi}}}^2:=h_{\hi} \sum_{n=0}^{N_{\hi}-1}\left|u_n^{\hi}\right|^2,
$$
and the rescaled $\ell^2$ norm:
$$
\|u\|_{\ell_h^2}^2:=\left\|u^1\right\|_{\ell^2_{1}}^2+\left\|u^2\right\|_{\ell^2_{2}}^2 .
$$
For the discrete derivatives, we write
$$
\begin{aligned}
& \left\|\nabla_h u\right\|_{\ell_h^2}^2:=\left\|\nabla_{1}^{+} u^1\right\|_{\ell_1^2}^2+\left\|\nabla_{2}^{+} u^2\right\|_{\ell_2^2}^2, \\
& \left\|\Delta_h u\right\|_{\ell_h^2}^2:=\left\|\Delta_{1} u^1\right\|_{\ell_1^2}^2+\left\|\Delta_{2} u^2\right\|_{\ell_2^2}^2.
\end{aligned}
$$
By periodicity, the forward and backward difference operators have the same layerwise \(\ell_h^2\) norm, so the choice of the forward difference in this definition is immaterial. We define the discrete $H^2$ norm by
$$
\|u\|_{H_h^2}^2 := \|u\|_{\ell_h^2}^2+\left\|\nabla_h u\right\|_{\ell_h^2}^2+\left\|\Delta_h u\right\|_{\ell_h^2}^2 .
$$
Note that the additive quantity used below is an equivalent norm:
\[
\|u\|_{H_h^2} \leq \|u\|_{\ell_h^2} + \|\nabla_hu\|_{\ell_h^2} + \|\Delta_hu\|_{\ell_h^2} \leq
\sqrt{3}\,\|u\|_{H_h^2}.
\]
We also use layerwise discrete maximum norm and the maximum norm for the two-layer system,
$$
\left\|u^i\right\|_{\infty, i}:=\max _{n \in I_i}\left|u_n^i\right|, \quad\|u\|_{\infty}:=\max _{i=1,2}\left\|u^i\right\|_{\infty, i}.
$$
Note that one has the estimate
$$
\|u\|_{\infty} \leq\left\|u^1\right\|_{\infty, 1}+\left\|u^2\right\|_{\infty, 2} \leq 2\|u\|_{\infty} .
$$
For the layerwise means, we use
$$
\left(u^{\hi}\right):=\frac{h_{\hi}}{2 \pi}  \sum_{n=0}^{N_{\hi}-1} u_n^{\hi}.
$$

\subsection{Existence of Minimizers}
From this point on, we assume that the potential $V \in C^2(\R)$ is even and satisfies the following polynomial decay condition: for some $r \geq 2$ and $C_V > 0$,
\begin{equation} \label{eq:potential_decay}
|V(x)|+(1+|x|)\left|V'(x)\right|+(1+|x|)^2\left|V''(x)\right| \leq C_V(1+|x|)^{-r}, \quad x \in \R .
\end{equation} 
Under these assumptions, and in particular using the evenness of \(V\), the discrete energy \eqref{eq:energy_discrete} enjoys the following translation and reflection symmetries.
\begin{lemma}[Discrete symmetries] \label{lem:discrete_symmetry} 
For \(d\in \mathbb Z\) and \(c\in\mathbb R\), define the operator
\begin{equation} \label{eq:symmetry_discrete} 
(\tau_{d,c}u)^1_n = u^1_{n+d}+h_2d+h_2c, \qquad (\tau_{d,c}u)^2_n = u^2_{n+d}+h_1c . 
\end{equation}
Here all indices are understood modulo the corresponding layer period \(N_{\hi}\). 
Then 
\[ E_N[\tau_{d,c}u]=E_N[u], \qquad d\in\mathbb Z,\ c\in\mathbb R. \] 
Consequently, the layer means transform under \(\tau_{d,c}\) by \[ ((\tau_{d,c}u)^1)=(u^1)+h_2d+h_2c, \qquad ((\tau_{d,c}u)^2)=(u^2)+h_1c . \]
Moreover, even-ness of $V$ implies the odd-reflection operator 
\[ (\mathcal Ru)^{\hi}_n := -u^{\hi}_{-n},\qquad {\hi}=1,2, \] 
is also a symmetry: \[ E_N[\mathcal Ru]=E_N[u]. \] 
\end{lemma}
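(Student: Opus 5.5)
The plan is to verify both identities term by term: first the elastic part of each $E^{\hi}$, then the interlayer sums. For the interlayer sums I will show that the arguments $A^{\hi}_{n,m}$ from \eqref{eq:argument} are permuted by each symmetry, modulo relabelling of $m$. Throughout I extend \eqref{eq:argument} to all $n\in\mathbb Z$ by the same formula, using $u^{\hi}_{n+N_{\hi}}=u^{\hi}_n$.

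Before any reindexing, I check that the sums over $m$ converge absolutely. For fixed $n$, $A^{\hi}_{n,m}[u]$ grows linearly in $|m|$, because the displacements are bounded (finitely many values) and the term $-2\pi\frac{h}{h_{\hi}}m$ dominates. By \eqref{eq:potential_decay} with $r\ge 2$, $\sum_m |V(A^{\hi}_{n,m})|<\infty$, so reindexing $m\mapsto m\pm 1$ or $m\mapsto -m$ is legitimate.

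\textbf{Elastic part.} Under $\tau_{d,c}$ each layer is shifted in index and by a constant. Hence $\nabla^+_{\hi}(\tau_{d,c}u)^{\hi}_n=\nabla^+_{\hi}u^{\hi}_{n+d}$, and a periodic reindexing of $n$ leaves $\sum_{n\in I_{\hi}}(\nabla^+_{\hi}u^{\hi}_n)^2$ unchanged. Under $\mathcal R$ we get $\nabla^+_{\hi}(\mathcal Ru)^{\hi}_n=\nabla^-_{\hi}u^{\hi}_{-n}$. The remark that forward and backward differences have equal periodic $\ell^2$ norms then gives invariance.

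\textbf{Translation of the interlayer arguments.} A direct substitution gives
\[
A^1_{n,m}[\tau_{d,c}u]=A^1_{n+d,m}[u],\qquad A^2_{n,m}[\tau_{d,c}u]=A^2_{n+d,m}[u].
\]
For layer $1$, the constants combine as $\frac{h}{h_2}(h_2d+h_2c)-\frac{h}{h_1}h_1c=hd$, and this is absorbed into $h(n+d)$. Layer $2$ is the same computation with the sign $(-1)^{\hj}=-1$.

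The main obstacle is to show that $n\mapsto\sum_m V(A^{\hi}_{n,m}[u])$ is $N_{\hi}$-periodic, so that the shifted sum over $I_{\hi}$ equals the original. This is not immediate: the partner index $n+m$ lives in the other layer, whose period differs by one. For layer $1$, using $u^2_{k+N}=u^2_{k-1}$ (since $N_2=N+1$) and $hN=2\pi h/h_1$, I compute
\[
A^1_{n+N,m}[u]=A^1_{n,m-1}[u].
\]
For layer $2$, using $u^1_{k+N+1}=u^1_{k+1}$ and $h(N+1)=2\pi h/h_2$, I compute
\[
A^2_{n+N+1,m}[u]=A^2_{n,m+1}[u].
\]
Summing over $m\in\mathbb Z$ then removes the shift, which gives $E_N[\tau_{d,c}u]=E_N[u]$. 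The formula for the means follows immediately from averaging \eqref{eq:symmetry_discrete} over a period.

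\textbf{Reflection.} Substituting $\mathcal Ru$ gives $A^{\hi}_{n,m}[\mathcal Ru]=-A^{\hi}_{-n,-m}[u]$ for $\hi=1,2$: every term in \eqref{eq:argument} is odd under $(n,m,u)\mapsto(-n,-m,-u)$. Evenness of $V$ yields $V(A^{\hi}_{n,m}[\mathcal Ru])=V(A^{\hi}_{-n,-m}[u])$. I then reindex $m\mapsto -m$, and $n\mapsto -n$ over a full period; the latter is justified by the $N_{\hi}$-periodicity established in the translation step. This gives $E_N[\mathcal Ru]=E_N[u]$.
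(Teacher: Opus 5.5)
Your proof is correct and complete. The paper states Lemma~\ref{lem:discrete_symmetry} without proof and uses it directly in Corollary~\ref{lem:mean_fixing}, so there is no published argument to compare against; your term-by-term verification supplies what is omitted.

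I checked each identity you rely on:
\begin{itemize}
\item $A^i_{n,m}[\tau_{d,c}u]=A^i_{n+d,m}[u]$. The $c$-shifts cancel because $\frac{h}{h_2}h_2c=\frac{h}{h_1}h_1c=hc$.
\item $A^1_{n+N,m}[u]=A^1_{n,m-1}[u]$ and $A^2_{n+N+1,m}[u]=A^2_{n,m+1}[u]$. These use $hN=2\pi h/h_1$, $h(N+1)=2\pi h/h_2$, $u^2_{k+N}=u^2_{k-1}$ and $u^1_{k+N+1}=u^1_{k+1}$.
\item $A^i_{n,m}[\mathcal Ru]=-A^i_{-n,-m}[u]$.
\end{itemize}

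You correctly identify the $N_i$-periodicity of $n\mapsto\sum_m V(A^i_{n,m}[u])$ as the only non-obvious point. The sum runs over $n\in I_i$, while the partner index $n+m$ lives on the other layer's lattice with a different period. The period mismatch is absorbed into a unit shift of $m$, and this is what legitimizes both the reindexing $n\mapsto n+d$ and $n\mapsto -n$. Your absolute-convergence remark (linear growth of $A^i_{n,m}$ in $|m|$, together with $r\ge 2$) makes the shifts in $m$ harmless. The paper uses the same kind of bookkeeping later, in the proof of Lemma~\ref{lem:potential_estimate}, through the cross-layer identity $A^{i}_{n,p-n+qN_{j}}=-A^{j}_{p,n-p-qN_{i}}$.
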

Use Lemma \ref{lem:discrete_symmetry}, we obtain the following corollary that allows us to bound the layer means of a configuration without changing its energy.
\begin{corollary}[Mean-fixing by translation] \label{lem:mean_fixing}
For every discrete pair $(u^1,u^2)\in \R^N\times\R^{N+1}$, there exists a discrete pair $(\tilde u^1,\tilde u^2)$ satisfying $E_N[\tilde u] = E_N[u]$ such that
\[
(\tilde u^1)+(\tilde u^2)=0,
\qquad
|(\tilde u^1)|=|(\tilde u^2)|\le \frac h4.
\]
\end{corollary}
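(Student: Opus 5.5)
Given $u$, the plan is to take $\tilde u=\tau_{d,c}u$ for a suitable $d\in\mathbb Z$ and $c\in\mathbb R$. Lemma \ref{lem:discrete_symmetry} then gives $E_N[\tilde u]=E_N[u]$ automatically, so only the means need to be arranged.

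Write $a:=(u^1)$ and $b:=(u^2)$. By Lemma \ref{lem:discrete_symmetry},
\[
(\tilde u^1)=a+h_2d+h_2c,\qquad (\tilde u^2)=b+h_1c .
\]

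\textbf{Step 1: kill the sum of the means.} For a given $d$, we solve $(\tilde u^1)+(\tilde u^2)=0$ for $c$:
\[
c=-\frac{a+b+h_2 d}{h_1+h_2}.
\]
This is always possible since $h_1+h_2>0$. Substituting back, the common modulus becomes
\[
|(\tilde u^2)|=\Bigl|b-\frac{h_1(a+b+h_2d)}{h_1+h_2}\Bigr|
=\Bigl|\frac{h_2 b-h_1 a}{h_1+h_2}-\frac{h_1h_2}{h_1+h_2}\,d\Bigr|
=\Bigl|\frac{h_2 b-h_1 a}{h_1+h_2}-\frac h2\,d\Bigr|,
\]
using $h=\frac{2h_1h_2}{h_1+h_2}$. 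Because the two means sum to zero, $|(\tilde u^1)|=|(\tilde u^2)|$ holds automatically.

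\textbf{Step 2: choose $d$.} As $d$ ranges over $\mathbb Z$, the quantity $\frac h2 d$ ranges over a lattice of spacing $h/2$. We take $d$ to be an integer nearest to $\frac{2}{h}\cdot\frac{h_2b-h_1a}{h_1+h_2}$. The distance from a real number to the nearest point of a lattice with spacing $h/2$ is at most $h/4$, which gives $|(\tilde u^2)|\le h/4$.

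\textbf{Main obstacle.} There is essentially none. The only care needed is the algebraic simplification of the coefficient of $d$ to $h/2$, which is exactly where the definition of $h$ enters. One should also note that the index shift in $\tau_{d,c}$ for arbitrary integer $d$ is well defined modulo $N_{\hi}$, as stipulated in Lemma \ref{lem:discrete_symmetry}.
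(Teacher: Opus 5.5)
Your proposal is correct and follows the paper's proof essentially verbatim. You choose $c=c(d)$ so the means sum to zero, note that the resulting common mean is affine in $d$ with slope of size $h/2$, and take $d$ to be the nearest integer to $\frac{h_2(u^2)-h_1(u^1)}{h_1h_2}$; the only cosmetic difference is that you track $(\tilde u^2)$ rather than $(\tilde u^1)$, which is immaterial since they are negatives of each other.
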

We next record a near--far field estimate for the interaction term, obtained directly from the decay assumption \eqref{eq:potential_decay}.
\begin{lemma}[Near--far field splitting]
\label{lem:nf_split}
For every $n\in I_i$ and $m\in\Z$, $\Anm[u]$ defined by \eqref{eq:argument}. Assume \eqref{eq:potential_decay} holds. Then for any $G=V,V',V''$, we have
\[
    \sum_{m\in\Z}\sup_{n\in I_i}|G(\Anm[u])| \leq 13C_V \left(1+\left\|u^1\right\|_{\infty, 1}+\left\|u^2\right\|_{\infty, 2} \right).
\]
\end{lemma}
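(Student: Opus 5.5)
The plan is to bound each $|G(A^i_{n,m}[u])|$ by the decay assumption \eqref{eq:potential_decay}, which for $G=V,V',V''$ gives $|G(x)|\le C_V(1+|x|)^{-r}$. Since $r\ge 2$, the supremum over $n$ will be controlled through a lower bound on $|A^i_{n,m}[u]|$ that is uniform in $n$ and grows linearly in $|m|$. We split the sum over $m$ into a \emph{near field}, with finitely many indices $m$, where we only use $|G|\le C_V$, and a \emph{far field}, where we use the polynomial decay.

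The first step is to understand the size of the deterministic part of \eqref{eq:argument}. For $n\in I_i$ we have $0\le hn\le hN_i$. Since $h=\frac{2h_1h_2}{h_1+h_2}$ lies between $h_2$ and $h_1$, we get $hN_i\le 2\pi\cdot\frac{h_1}{h_2}\le 4\pi$ and $h/h_i\in[\tfrac12,2]$; more precisely $h/h_i\ge \tfrac{N_i}{N+1}\ge \tfrac23$. The displacement part is bounded by $\frac{h}{h_j}|u^i_n|+\frac{h}{h_i}|u^j_{n+m}|\le 2\bigl(\|u^1\|_{\infty,1}+\|u^2\|_{\infty,2}\bigr)=:2M$. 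Hence
\[
|A^i_{n,m}[u]|\ \ge\ 2\pi\tfrac{h}{h_i}|m| - 4\pi - 2M\ \ge\ \tfrac{4\pi}{3}|m| - 4\pi-2M .
\]
The second step defines the near field as those $m$ with $\tfrac{4\pi}{3}|m|\le 2(4\pi+2M)$, that is, $|m|\le 6+\tfrac{3M}{\pi}$. The number of such $m$ is at most $13+\tfrac{6M}{\pi}\le 13(1+M)$, and each term contributes at most $C_V$. This is where the linear dependence on $M$ comes from, and it matches the constant $13$.

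For the far field, $|A^i_{n,m}|\ge \tfrac{2\pi}{3}|m|$, uniformly in $n$. Summing $C_V(1+\tfrac{2\pi}{3}|m|)^{-r}$ over $|m|\ge 7$ gives a small absolute constant times $C_V$, since $r\ge2$. The main obstacle is purely bookkeeping: the far-field constant has to be absorbed into the stated $13C_V(1+M)$. For this I would tighten the near-field count slightly, for example by using $h/h_i\ge 2/3$ exactly and noting that the near-field count $13+6M/\pi$ leaves slack $C_V(13M-6M/\pi)$ when $M>0$. If that is not enough when $M=0$, I would instead refine the lower bound on $hn$, using that $hn-\tfrac{h}{h_i}\cdot$(shift) can be centred so the deterministic offset is at most $2\pi$ rather than $4\pi$. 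This shrinks the near field to about $7+\dots$ indices and leaves room for the tail $\sum_{|m|\ge 4}C_V(1+2|m|)^{-2}<C_V$. Either way the structure of the argument is unchanged.
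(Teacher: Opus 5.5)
Your proposal is correct and is essentially the paper's argument. Both use a lower bound on $|A^i_{n,m}[u]|$ that is uniform in $n$ and linear in $|m|$, a near field of size linear in $1+U$ (with $U=\|u^1\|_{\infty,1}+\|u^2\|_{\infty,2}$) bounded termwise by $C_V$, and a far-field tail summable because $r\ge 2$. The paper uses the cruder bound $|A^i_{n,m}[u]|\ge \pi|m|-2\pi(1+U)$ with cutoff $\lceil 4(1+U)\rceil$, which gives $11C_V(1+U)+2C_V$.

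As you noticed, your first bookkeeping slightly overshoots $13C_V$ at $U=0$. Your fallback closes the argument, and no centring is needed: $0\le hn\le h(N_i-1)<2\pi$ holds directly for every $n\in I_i$, since $h=4\pi/(2N+1)$. This gives a near-field count of at most $7+6U/\pi$ plus a tail below $C_V$.
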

We obtain an a priori upper bound for the energy of minimizers by testing \eqref{eq:energy_discrete} with the zero configuration \(u^{\hi}_n = 0\). Indeed, Lemma \ref{lem:nf_split} gives
$$
E^i[0] \leq \avsum_{n \in I_{\hi}} \sum_{m \in \mathbb{Z}}\sup_{n\in I_{\hi}}\left|V\left((-1)^{\hj}h n-2 \pi \frac{h}{h_{\hi}} m\right)\right| \leq 26 \pi C_V.
$$
In what follows we restrict attention to configurations satisfying \(E_N[u] \leq C_0\), where \(C_0:=\sup_{N \geq 2} E_N[0] \leq 52 \pi C_V\).
\begin{theorem}[Existence of discrete minimizer]
\label{thm:discrete_minimizer}
Let
\[
\AN
:=
\Bigl\{(u^1,u^2)\in \R^N\times\R^{N+1} :
(u^1)+(u^2)=0,\, |(u^{\hi})|\le \frac h4
\Bigr\}.
\]
There exists a $u_* \in \AN$ such that
\[
E_N[u_*] =\inf_{\AN} E_N[u] =\inf_{\R^N\times\R^{N+1}} E_N[u].
\]
\end{theorem}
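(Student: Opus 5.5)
The plan is to apply the direct method in finite dimensions. Two facts need proving: $E_N$ is continuous on $\R^N\times\R^{N+1}$, and $E_N$ is coercive on $\AN$. Once both are in hand, restricting to the bounded set $\AN\cap\{E_N\le C_0\}$ gives a minimizer. Corollary \ref{lem:mean_fixing} then identifies the two infima.

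\textbf{Step 1 (reduction to $\AN$).} By Corollary \ref{lem:mean_fixing}, every $u\in\R^N\times\R^{N+1}$ has a partner $\tilde u\in\AN$ with $E_N[\tilde u]=E_N[u]$. Hence $\inf_{\AN}E_N=\inf_{\R^N\times\R^{N+1}}E_N$. Moreover, $0\in\AN$ and $E_N[0]\le C_0$, so the infimum is finite and it suffices to minimize over $K:=\AN\cap\{E_N\le C_0\}$.

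\textbf{Step 2 (continuity).} Each term $V(\Anm[u])$ is continuous in $u$, since $V\in C^2$. On any ball $\|u\|_\infty\le R$, Lemma \ref{lem:nf_split} applied with $G=V$ gives a summable majorant $\sup_n|V(\Anm[u])|$. The decay \eqref{eq:potential_decay} makes this majorant uniform in $u$ on the ball: for $|m|$ large, $|\Anm[u]|\gtrsim |m|$ uniformly, so the tail is bounded by $\sum_{|m|>M}(1+c|m|)^{-r}$ with $r\ge2$. The series in $m$ therefore converges locally uniformly, and $E_N$ is continuous.

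\textbf{Step 3 (coercivity on $\AN$), the main obstacle.} Since $V$ can be negative, we first need a lower bound on the interaction term. The crude bound from Lemma \ref{lem:nf_split} grows linearly in $\|u\|_\infty$, which I expect to be the delicate point. I would instead show directly that $\sum_m |V(\Anm[u])|$ is bounded uniformly in $u$. For fixed $n$, the points $\Anm[u]$, $m\in\Z$, are of the form $c_n - \frac{h}{h_i}(u^j_{n+m}+2\pi m)$. Since $u^j$ is periodic, the values $u^j_{n+m}$ take only finitely many values. Hence, for each residue class of $m$ modulo $N_j$, these points form an arithmetic progression with spacing $2\pi \frac{h}{h_i}N_j$, up to the fixed shift. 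The sum of $(1+|x|)^{-r}$ over a progression with a fixed positive spacing is bounded independently of the shift. Summing over the $N_j$ residue classes gives a bound $C(N,C_V)$ independent of $u$. This yields
\[
E_N[u]\ \ge\ \tfrac12\|\nabla_h u\|_{\ell^2_h}^2 - C(N,C_V).
\]
On $K$, this bounds $\|\nabla_h u\|_{\ell^2_h}$. Combined with the mean bound $|(u^i)|\le h/4$ and a discrete Poincar\'e--Wirtinger inequality (or simply $|u^i_n-(u^i)|\le \sum_k h_i|\nabla_i^+u^i_k|\le \sqrt{2\pi}\|\nabla_i^+u^i\|_{\ell^2_i}$), this bounds $\|u\|_\infty$ on $K$. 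So $K$ is bounded. It is also closed, because $\AN$ is defined by closed linear and inequality constraints and $E_N$ is continuous. Thus $K$ is compact.

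\textbf{Step 4 (conclusion).} A continuous function attains its minimum on a nonempty compact set. This gives $u_*\in K\subset\AN$ with $E_N[u_*]=\inf_K E_N=\inf_{\AN}E_N$, and by Step 1 this also equals $\inf_{\R^N\times\R^{N+1}}E_N$.
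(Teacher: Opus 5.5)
Your proof is correct. Its skeleton is the same as the paper's: mean-fixing via Corollary \ref{lem:mean_fixing}, restriction to the sublevel set $\KN$, then compactness plus continuity. The difference is in how you show the sublevel set is bounded.

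The paper does this through Lemma \ref{lem:uniform_boundedness}. It keeps the lower bound from Lemma \ref{lem:nf_split}, which is linear in $U=\|u^1\|_{\infty,1}+\|u^2\|_{\infty,2}$. Under the mean normalization it combines this with the discrete Morrey-type estimate $U\le \frac h2+2\sqrt{\pi}\sqrt{G[u]}$. This gives $E_N[u]\ge \frac12 G[u]-52\pi C_V\bigl(1+\frac h2+2\sqrt{\pi}\sqrt{G[u]}\bigr)$. So the linear growth you flagged as the delicate point is harmless: it becomes $O(\sqrt{G})$ and is absorbed by the quadratic term.

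Your residue-class argument instead gives an interaction lower bound independent of $u$. For fixed $n$ and each class $m \bmod N_j$, the arguments $A^i_{n,m}[u]$ form a progression with spacing $2\pi\frac{h}{h_i}N_j=\frac{8\pi^2}{h_1+h_2}$. Hence $\sum_m|V(A^i_{n,m}[u])|$ is at most a constant times $N_jC_V$, whatever $u$ is.

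This route is more elementary and self-contained, and your continuity argument via locally uniform convergence is more explicit than the paper's bare assertion. However, your constant $C(N,C_V)$ grows linearly in $N$, because it charges $O(C_V)$ to every atom of the other layer in the cell. That is enough for existence at fixed $N$, which is all the theorem asserts. It does not give the $N$-uniform bound of Lemma \ref{lem:uniform_boundedness}, which the paper reuses for \eqref{eq:ud_bound}, for Lemma \ref{lem:EL_lipschitz}, and for the $N$-independent Lipschitz constant $L_F$ in Lemma \ref{lem:potential_estimate}.
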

The proof is a finite-dimensional direct-method argument once the translation symmetry has been normalized.
We first use Corollary \ref{lem:mean_fixing} to choose a normalized representative in each symmetry class, and then use the zero configuration together with Lemma \ref{lem:nf_split} to obtain a uniform bound on the normalized sublevel set (see Lemma \ref{lem:uniform_boundedness}). These two ingredients make the relevant sublevel set compact, so the energy attains its minimum there.

We first apply Corollary \ref{lem:mean_fixing} to prove that every normalized state $u$ is uniformly bounded given that $E_N[u] \leq C_0$.
\begin{lemma}[Uniform boundedness for normalized states] \label{lem:uniform_boundedness}
Define the sublevel set
$$
\KN := \Bigl\{u \in \AN: E_N[u] \leq C_0\Bigr\},
$$
where $C_0:=\sup_{N \geq 2} E_N[0] \leq 52 \pi C_V$. Assume $u \in \KN$, then
$$
\left\|u^1\right\|_{\infty, 1}+\left\|u^2\right\|_{\infty, 2} \leq 
\frac{h}{2} + 42\pi \sqrt{C_V} + 416\pi^2 C_V.
$$
Since $h = \frac{4\pi}{2N+1}$, the sum of the layerwise maximum norms can be bounded uniformly in $N$ by relaxing the RHS as
$$
\left\|u^1\right\|_{\infty, 1}+\left\|u^2\right\|_{\infty, 2} \leq 
42\pi \sqrt{C_V} + 416\pi^2 C_V + \frac{2\pi}{3}.
$$
\end{lemma}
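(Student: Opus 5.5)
We bound each layer separately as mean plus oscillation, and control the oscillation through the elastic part of the energy, using the interaction bound of Lemma \ref{lem:nf_split} to absorb the negative part of the interlayer term. For $u\in\KN$ the means satisfy $|(u^{\hi})|\le h/4$, so together they contribute at most $h/2$, which is the first term on the right-hand side. For the oscillation we use the discrete Poincar\'e/Sobolev estimate on the periodic lattice: for $n,k\in I_{\hi}$,
\[
|u^{\hi}_n-u^{\hi}_k|\le h_{\hi}\sum_{j\in I_{\hi}}|\nabla^+_{\hi}u^{\hi}_j|\le \sqrt{2\pi}\,\|\nabla^+_{\hi}u^{\hi}\|_{\ell^2_{\hi}} ,
\]
by Cauchy--Schwarz with $h_{\hi}N_{\hi}=2\pi$. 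Since the mean is an average of the values, $\|u^{\hi}\|_{\infty,\hi}\le |(u^{\hi})|+\sqrt{2\pi}\,\|\nabla^+_{\hi}u^{\hi}\|_{\ell^2_{\hi}}$.

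Next we bound the gradient. From $E_N[u]\le C_0$ and \eqref{eq:energy_discrete},
\[
\tfrac12\|\nabla_h u\|_{\ell^2_h}^2\le C_0+\sum_{\hi}\avsum_{n\in I_{\hi}}\sum_{m}|V(\Anm[u])|\le C_0+2\pi\cdot 2\cdot 13C_V\bigl(1+M\bigr),
\]
where $M:=\|u^1\|_{\infty,1}+\|u^2\|_{\infty,2}$ and Lemma \ref{lem:nf_split} with $G=V$ is applied for each layer (the sum over $n$ weighted by $h_{\hi}$ gives a factor $2\pi$). Using $C_0\le 52\pi C_V$, this yields $\|\nabla_hu\|_{\ell^2_h}^2\le 208\pi C_V+104\pi C_V M$.

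Finally we close the loop, which is self-referential because Lemma \ref{lem:nf_split} only bounds the interaction linearly in $M$. Combining the two layers gives
\[
M\le \tfrac h2+\sqrt{2\pi}\bigl(\|\nabla^+_1u^1\|+\|\nabla^+_2u^2\|\bigr)\le \tfrac h2+\sqrt{4\pi}\,\|\nabla_hu\|_{\ell^2_h}\le \tfrac h2+\sqrt{4\pi}\sqrt{208\pi C_V+104\pi C_V M}.
\]
We split the square root by $\sqrt{a+b}\le\sqrt a+\sqrt b$ and apply Young's inequality $\sqrt{4\pi\cdot104\pi C_V M}\le \tfrac12 M+208\pi^2C_V$. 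This absorbs $\tfrac12M$ into the left-hand side, and multiplying by $2$ gives $M\le h+2\sqrt{832}\,\pi\sqrt{C_V}+416\pi^2C_V$, with $2\sqrt{832}\approx 57.7$.

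The main obstacle is bookkeeping the numerical constants to land exactly at $h/2+42\pi\sqrt{C_V}+416\pi^2C_V$, since the crude absorption above doubles the $h$ term. To avoid this, we would first treat $M':=M-h/2$, or equivalently bound only the oscillation part, and absorb via Young on that. We would also sharpen the Poincar\'e step: using $|u_n-(u)|\le\sqrt{\pi/2}\,\|\nabla^+u\|$, or the better constant from the periodic Sobolev bound, reduces $\sqrt{2\pi}$ and should bring the coefficient down to $42\pi$. The final relaxed bound then follows from $h=\tfrac{4\pi}{2N+1}\le\tfrac{4\pi}{5}$ for $N\ge2$, so $h/2\le 2\pi/3$.
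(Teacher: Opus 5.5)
Your ingredients are the paper's: the discrete Morrey bound with the means $|(u^i)|\le h/4$, and Lemma \ref{lem:nf_split} giving $\|\nabla_h u\|_{\ell^2_h}^2\le 208\pi C_V+104\pi C_V M$. The gap is in the closing step, which as executed does not prove the statement.

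\textbf{What goes wrong.} Young's inequality with weight $\tfrac12$ doubles every term left on the right, so you only get $M\le h+2\sqrt{832}\,\pi\sqrt{C_V}+416\pi^2C_V$. This does not imply the lemma: the $h$ term exceeds $h/2$, so it fails for small $C_V$, and $2\sqrt{832}\approx 57.7>42$. Neither of your listed repairs suffices on its own:
\begin{itemize}
\item Young applied only to the oscillation $O:=\sqrt{4\pi}\,\|\nabla_hu\|_{\ell^2_h}$ starts from $O^2\le (832+208h)\pi^2C_V+416\pi^2C_V\,O$. It fixes the $h/2$, but still gives a $\sqrt{C_V}$ coefficient $2\sqrt{832+208h}\,\pi\ge 57.7\pi$.
\item The sharper Sobolev constant alone, still absorbing on $M$, leaves $h$ in place of $h/2$.
\end{itemize}

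\textbf{What closes it.} Keep $h/2$ outside and take the exact root of the quadratic instead of using Young; this is how the paper's constants arise. Set $s:=\|\nabla_hu\|_{\ell^2_h}$ and insert $M\le \tfrac{h}{2}+2\sqrt{\pi}\,s$ into the energy lower bound. This gives $s^2\le 208\pi^{3/2}C_V\,s+104\pi(2+h/2)\,C_V$. Since $s^2\le 2bs+c$ implies $s\le 2b+\sqrt{c}$, you get
\[
M\le \tfrac{h}{2}+416\pi^2C_V+2\pi\sqrt{104\,(2+h/2)}\,\sqrt{C_V}.
\]
With $h/2\le 2\pi/3$ the coefficient is $2\sqrt{104(2+2\pi/3)}\approx 41.3<42$.

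Alternatively, your two remedies together do work. The constant $\sqrt{\pi/2}$ is valid, because on a periodic lattice the oscillation is at most half the total variation. With it, Young applied to the oscillation part gives $M\le \tfrac{h}{2}+2\pi\sqrt{(208+52h)C_V}+104\pi^2C_V$, which lies below the stated bound.

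The lemma asserts explicit constants, and they are reused in \eqref{eq:ud_bound} and Lemma \ref{lem:potential_estimate}. So this final verification has to be carried out, not asserted with ``should''.
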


\begin{proof}[Proof of Theorem \ref{thm:discrete_minimizer}]
By the translation symmetry \eqref{eq:symmetry_discrete} and Corollary \ref{lem:mean_fixing}, we have
$$
\inf _{\R^N \times \R^{N+1}} E_N=\inf _{\AN} E_N .
$$
Since \(0\in \AN\) and \(E_N[0]\le C_0\), we have \(\inf _{\AN} E_N \le C_0\). If \(\inf _{\AN} E_N < C_0\) then any minimizing sequence in \(\AN\) is eventually contained in \(\KN\), and hence
$$
\inf_{\KN}E_N \le \inf _{\AN} E_N.
$$
Since \(\KN\subset \AN\), the reverse inequality is immediate. If \(\inf _{\AN} E_N=C_0\), then
\[
\inf _{\AN} E_N\le E_N[0]\le C_0=\inf _{\AN} E_N,
\]so \(E_N[0]=\inf _{\AN} E_N\), and \(0\in \KN\) gives the same conclusion. Hence,
$$
\inf _{\AN} E_N=\inf_{\KN} E_N.
$$
The set $\KN$ is nonempty and by Lemma \ref{lem:uniform_boundedness} is bounded. Now for fixed $N$, the layer means are continuous and hence $\AN$ is closed. $E_N$ is continuous, hence $\KN$ is also closed, and thus compact in the finite-dimensional space $\R^N \times \R^{N+1}$. Therefore $E_N$ attains its minimum on $\KN$, and hence also on $\R^N \times \R^{N+1}$.
\end{proof}

\subsection{Euler-Lagrange equation} Now take 
\begin{equation} \label{eq:force_map}
F_n^{\hi}\left(u\right) := 2 \sum_{m \in \Z} V'\left(\Anm\left[u\right]\right), \quad  n \in I_{\hi}, \quad {\hi}=1,2.
\end{equation}
Then the Euler-Lagrange equation of \eqref{eq:energy_discrete} can be written as
\begin{equation} \label{eq:EL_compact}
\Delta_{{\hi}} u^{\hi}=F^{\hi}\left(u\right), \quad {\hi}=1,2.
\end{equation}
We next record a consequence of \eqref{eq:EL_compact} for uniformly bounded states.
\begin{lemma}[Discrete Lipschitz bound for $u^{\hi}_n$]
\label{lem:EL_lipschitz}
Assume \(u=(u^1,u^2) \in \mathcal{K}_N\) solves the discrete Euler--Lagrange equation \eqref{eq:EL_compact}, then \(\|\nabla_{\hi}^+u^{\hi}\|_{\infty, \hi}\) is also uniformly bounded in \(N\) for \({\hi}=1,2\).  Consequently, for every \(n\in I_{\hi}\) and \(m\in \Z \), we have the following discrete Lipschitz bound for \(u^{\hi}\):
\[
|u^{\hi}_{n+m}-u^{\hi}_n|
\leq 3.3 \times 10^4 \pi^3\left(C_V+C_V^2\right) |m| h.
\]
\end{lemma}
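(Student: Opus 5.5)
We will bound $\|\nabla_i^+u^i\|_{\infty,i}$ by combining the Euler--Lagrange equation \eqref{eq:EL_compact} with Lemma \ref{lem:nf_split} and Lemma \ref{lem:uniform_boundedness}. Once the gradient bound holds, the Lipschitz estimate is a telescoping sum.

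\textbf{Step 1: uniform bound on the force.} Since $u\in\KN$, Lemma \ref{lem:uniform_boundedness} gives $\|u^1\|_{\infty,1}+\|u^2\|_{\infty,2}\le M$. Here
\[
M:=42\pi\sqrt{C_V}+416\pi^2C_V+\tfrac{2\pi}{3}
\]
is independent of $N$. Lemma \ref{lem:nf_split} with $G=V'$ then yields, for every $n\in I_i$,
\[
|F^i_n(u)|\le 2\sum_{m\in\Z}\sup_{n}|V'(\Anm[u])|\le 26C_V(1+M).
\]
So $\|\Delta_i u^i\|_{\infty,i}\le 26C_V(1+M)=:K$, uniformly in $N$.

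\textbf{Step 2: from the Laplacian to the gradient.} Set $w_n:=\nabla_i^+u^i_n$. This sequence is $N_i$-periodic and has zero sum over a period, because $u^i$ is periodic. The Euler--Lagrange equation gives $w_n-w_{n-1}=h_i\Delta_i u^i_n$, so
\[
|w_n-w_k|\le h_i\,|n-k|\,K\le 2\pi K \qquad\text{for } |n-k|\le N_i.
\]
The mean of $w$ is zero, so some index $k$ satisfies $w_k\le 0$ and another satisfies $w_{k'}\ge 0$. Hence $\min_n w_n\le 0\le\max_n w_n$, and therefore
\[
\|w\|_{\infty,i}\le \max_n w_n-\min_n w_n\le 2\pi K.
\]
This gives the uniform-in-$N$ gradient bound $\|\nabla_i^+u^i\|_{\infty,i}\le 52\pi C_V(1+M)$.

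\textbf{Step 3: telescoping and constants.} Write
\[
u^i_{n+m}-u^i_n=h_i\sum_{j}\nabla_i^+u^i_{j},
\]
where the sum runs over the $|m|$ consecutive indices between $n$ and $n+m$; this uses periodicity of $u^i$ when the indices wrap around. Each term is bounded by Step 2, so
\[
|u^i_{n+m}-u^i_n|\le |m|\,h_i\,52\pi C_V(1+M).
\]
Since $h_1=\frac{2\pi}{N}$ and $h=\frac{4\pi}{2N+1}$, we have $h_i\le \tfrac{5}{4}h$ for $N\ge2$, which converts $h_i$ into $h$. It remains to bound $C_V(1+M)$ by a multiple of $C_V+C_V^2$. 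The term $C_V\sqrt{C_V}$ is handled by $\sqrt{C_V}\le \tfrac12(1+C_V)$, and the resulting product is collected into a constant times $\pi^3(C_V+C_V^2)$.

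\textbf{Main obstacle.} The only delicate point is this last bookkeeping: checking that the explicit constant produced fits under $3.3\times10^4\pi^3$. The expression $52\pi\cdot\tfrac54\cdot\bigl(1+\tfrac{2\pi}{3}+21\pi+(21\pi+416\pi^2)\bigr)$ should do so comfortably. If some $\pi$ factor is short, use $1\le\pi$ to absorb it.
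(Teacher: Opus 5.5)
Your proposal is correct and takes the same route as the paper's proof: bound $\|\Delta_i u^i\|_{\infty,i}=\|F^i(u)\|_{\infty,i}\le 26C_V(1+U)$ via Lemmas \ref{lem:nf_split} and \ref{lem:uniform_boundedness}, use the zero-sum property of $\nabla_i^+u^i$ to get $\|\nabla_i^+u^i\|_{\infty,i}\le 2\pi\|\Delta_i u^i\|_{\infty,i}$, then telescope (your $\min\le 0\le\max$ formulation is the paper's opposite-sign argument). Your constant bookkeeping also closes, since your expression is about $8.7\times10^5$, below $3.3\times10^4\pi^3\approx1.02\times10^6$.
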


We use this discrete Lipschitz bound to compare the full interaction in \eqref{eq:energy_discrete} with a localized form \eqref{eq:energy_localized}. This localized form has an exact vertical-shift symmetry, which yields the following approximate symmetry for the original discrete energy.

\begin{lemma}[Approximate symmetry] \label{lem:approx_symmetry}
Consider the localized energy functional with ${\hi}\in\{1,2\}$:
\begin{equation} \label{eq:energy_localized}
\Eloc^{\hi}[u] := \frac{1}{2} \avsum_{n \in I_{\hi}} (\nabla_{\hi}^{+}u^{\hi}_n)^2 + \avsum_{n \in I_{\hi}}\sum_{m \in \mathbb{Z}}V\left((-1)^{\hj}hn+u^{\hi}_n-u^{\hj}_{n}-2\pi m\right).
\end{equation}
The transformation
\begin{equation} \label{symmetry_localized}
(\tau_{k}u)^1_n = u^1_n+ 2 \pi k_1, \quad (\tau_{k}u)^2_n = u^2_n + 2 \pi k_2, \quad  \forall k_1, k_2 \in \Z,
\end{equation}
satisfies $\Eloc^i[\tau_k u] = \Eloc^i[u]$. For $V \in C^2(\R)$ satisfying \eqref{eq:potential_decay}, $u$ uniformly bounded as in Lemma \ref{lem:uniform_boundedness}, and satisfying \eqref{eq:EL_compact}; define the polynomial for $C_V$ as
\begin{equation}\label{eq:poly_CV}
\mathcal{P}\left(C_V\right):= 9 \times 10^{12} \pi^8  \left(C_V+C_V^5\right) .
\end{equation}
then for $i=1,2$,
$$
\left|E^{\hi}[u] - \Eloc^{\hi}[u]\right| \leq h\mathcal{P}(C_V).
$$
Consequently, for any fixed $k=(k_1,k_2)\in\Z^2$, there exists a constant $C(k,C_V)>0$, independent of $N$, such that
$$
\left|E^i[\tau_k u]-E^i[u]\right|
\le C(k,C_V)h.
$$
\end{lemma}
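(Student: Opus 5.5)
The plan is to compare the two interaction sums term by term, using the discrete Lipschitz bound of Lemma~\ref{lem:EL_lipschitz} and the decay assumption \eqref{eq:potential_decay}. The elastic parts of $E^{\hi}$ and $\Eloc^{\hi}$ coincide, so only the interaction terms need to be compared.

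\emph{Exact invariance.} The invariance of $\Eloc^{\hi}$ under $\tau_k$ is immediate. The elastic term only involves differences $u^{\hi}_{n+1}-u^{\hi}_n$, which $\tau_k$ leaves unchanged. In the interaction term, $u^{\hi}_n-u^{\hj}_n$ shifts by $2\pi(k_{\hi}-k_{\hj})\in 2\pi\Z$, which is absorbed by reindexing $m$.

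\emph{Argument comparison.} Write
\[
B^{\hi}_{n,m}[u]:=(-1)^{\hj}hn+u^{\hi}_n-u^{\hj}_n-2\pi m .
\]
Then $\Anm[u]-B^{\hi}_{n,m}[u]$ splits into three pieces:
\[
\Bigl(\tfrac{h}{h_{\hj}}-1\Bigr)u^{\hi}_n,\qquad -\tfrac{h}{h_{\hi}}\bigl(u^{\hj}_{n+m}-u^{\hj}_n\bigr)-\Bigl(\tfrac{h}{h_{\hi}}-1\Bigr)u^{\hj}_n,\qquad -2\pi m\Bigl(\tfrac{h}{h_{\hi}}-1\Bigr).
\]
Since $h=2h_1h_2/(h_1+h_2)$ and $h_2=(1-\theta)h_1$, we have $|h/h_{\hi}-1|\le Ch$ with an absolute constant $C$.
\begin{itemize}
\item The first piece and the $(h/h_{\hi}-1)u^{\hj}_n$ part of the second are $O(h)$ times the sup norms, which are controlled by Lemma~\ref{lem:uniform_boundedness}.
\item The difference $u^{\hj}_{n+m}-u^{\hj}_n$ is at most $C'(C_V+C_V^2)|m|h$ by Lemma~\ref{lem:EL_lipschitz}.
\item The last piece is $O(|m|h)$.
\end{itemize}
Altogether,
\[
|\Anm[u]-B^{\hi}_{n,m}[u]|\le K h(1+|m|),
\]
with $K$ a polynomial in $C_V$ independent of $N$, $n$ and $m$.

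\emph{Mean value theorem and summation in $m$.} This is the main obstacle, because the error grows linearly in $m$ and the intermediate point $\xi_{n,m}$ must be controlled for all $m$. The mean value theorem gives
\[
|V(\Anm)-V(B^{\hi}_{n,m})|\le |V'(\xi_{n,m})|\,Kh(1+|m|),
\]
with $\xi_{n,m}$ between $\Anm$ and $B^{\hi}_{n,m}$. Both arguments equal $-2\pi m\,c$ with $c\in[1-Ch,1+Ch]$, plus a term bounded by $2\pi+\|u^1\|_{\infty,1}+\|u^2\|_{\infty,2}+C''|m|h$. Hence there is a threshold $M$, depending only on $C_V$, such that $|\xi_{n,m}|\ge \pi|m|$ for $|m|>M$.
\begin{itemize}
\item For such $m$, \eqref{eq:potential_decay} gives $|V'(\xi_{n,m})|(1+|m|)\le C_V(1+\pi|m|)^{-r-1}(1+|m|)$. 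This is summable because $r\ge2$.
\item For $|m|\le M$ we simply use $|V'|\le C_V$, contributing $C_V(2M+1)(1+M)$.
\end{itemize}
Summing over $m$ and applying $\avsum_{n\in I_{\hi}}$, whose total weight is $2\pi$, yields
\[
|E^{\hi}[u]-\Eloc^{\hi}[u]|\le h\,\mathcal P(C_V).
\]
The explicit constant in \eqref{eq:poly_CV} comes from tracking the powers of $C_V$ through $K$ and $M$; I expect that bookkeeping to be tedious but routine.

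\emph{The shifted configuration.} We have $\|\tau_k u\|_\infty\le\|u\|_\infty+2\pi|k|$, and differences within a layer are unchanged, so the Lipschitz bound still holds for $\tau_k u$. Note that Lemma~\ref{lem:EL_lipschitz} was only used to control differences, which $\tau_k$ preserves. Repeating the comparison above for $\tau_k u$ with these enlarged bounds gives
\[
|E^{\hi}[\tau_k u]-\Eloc^{\hi}[\tau_k u]|\le C(k,C_V)h .
\]
Combining this with $\Eloc^{\hi}[\tau_k u]=\Eloc^{\hi}[u]$ and the triangle inequality gives $|E^{\hi}[\tau_k u]-E^{\hi}[u]|\le C(k,C_V)h$.
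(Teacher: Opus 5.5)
Your proposal is correct and follows essentially the same route as the paper. Both arguments use the same localized arguments $B^{i}_{n,m}$, the bound $|A^{i}_{n,m}-B^{i}_{n,m}|\le h\,(c_1U+c_2|m|)$ coming from the $h/(4\pi)$ mismatch factors and Lemma~\ref{lem:EL_lipschitz}, the mean value theorem with a near/far-field split of the $V'$-sums, and the same triangle-inequality argument for $\tau_k u$ (differences are preserved and the sup norms grow by at most $2\pi(|k_1|+|k_2|)$). The differences are cosmetic: you merge the paper's separate unweighted and $|m|$-weighted sums into one $(1+|m|)$ weight, and you leave as bookkeeping the explicit constant in $\mathcal{P}(C_V)$, which your bound reproduces in the same $C_V+C_V^5$ form.
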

\begin{remark}
This approximate symmetry is a discrete analogue of the $J[v+2\pi k] = J[v]$ symmetry in the continuum energy. In the continuum limit, this holds exactly, but in the discrete model, for each
fixed $k$, it holds approximately up to an error of order $O(h)$.
\end{remark}
\subsection{Discrete $H^2$ stability}
We now state the discrete stability estimate for the Euler--Lagrange equation
\eqref{eq:EL_compact}. As in the continuum stability result, the estimate
requires a smallness condition on the nonlinear term, formulated below in terms of
the Lipschitz constant of the force map \(F=(F^1,F^2)\).

Let \(v_0\) be the normalized continuum solution of \eqref{eq:EL_v} satisfying $v_0(0)=0$. The corresponding continuum pair is
\[
u_1(x)=\frac{v_0(x)-x}{2}, \qquad
u_2(x)=\frac{x-v_0(x)}{2}.
\]
We denote by \(u_c=(u_c^1,u_c^2)\) the sampling of this continuum pair on the
two grids, namely
\[
(u_c^i)_n := u_i(x_n^i), \qquad n\in I_i, \quad i=1,2.
\]
Here \(x_n^i := n h_i, \, n\in I_i\) denotes the grid point in layer \(i\), with the periodic extension \(x_{n+N_i}^i=x_n^i+2\pi\). Since $v_0$ satisfies $-\pi \leq v_0(x) \leq \pi$ for all $x \in [-\pi,\pi]$, the periodic sampled pair \(u_c\) is uniformly bounded in $\R$: 
\begin{equation} \label{eq:uc_bound}
\left\|u_c^1\right\|_{\infty, 1}+\left\|u_c^2\right\|_{\infty, 2} \leq 2 \pi .
\end{equation}

Let \(u_d=(u_d^1,u_d^2)\) be an exact discrete solution of
\eqref{eq:EL_compact}. In the argument below, \(u_d\) will be the
symmetry-normalized discrete minimizer constructed in Theorem \ref{thm:discrete_minimizer} using Corollary \ref{lem:mean_fixing};
in particular, by Lemma \ref{lem:uniform_boundedness}, \(u_d\)
is also uniformly bounded by:
\begin{equation} \label{eq:ud_bound}
\left\|u_d^1\right\|_{\infty, 1}+\left\|u_d^2\right\|_{\infty, 2} \leq 
42\pi \sqrt{C_V} + 416\pi^2 C_V + \frac{2\pi}{3}.
\end{equation}

The sampled continuum pair \(u_c\) is not an exact solution of the discrete
Euler--Lagrange equation. We define its residual by
\begin{equation} \label{eq:EL_residual}
r^i := \Delta_{\hi} u_c^i - F^i(u_c), \qquad i=1,2.
\end{equation}
The consistency estimate will later give \(\|r\|_{\ell_h^2}=O(h)\).

Define the error
\[
w=(w^1,w^2):=u_d-u_c.
\]
Subtracting \eqref{eq:EL_residual} from \eqref{eq:EL_compact} gives
\begin{equation} \label{eq:error_equation}
\Delta_{\hi} w^i = F^i(u_d)-F^i(u_c)-r^i, \qquad i=1,2.
\end{equation}

\begin{theorem}[Discrete \(H^2\)-stability]
\label{thm:discrete_H2_stability}
Let \(u_c\), \(u_d\), \(r\), and \(w=u_d-u_c\) be defined as above. Let $C_P$ be the Poincaré constant defined in Lemma \ref{lem:two_layer_poincare}, and let \(L_F\) be the Lipschitz constant of the force map \(F=(F^1,F^2)\) defined in Lemma \ref{lem:potential_estimate}. Assume the smallness condition
\begin{equation}
\label{eq:discrete_stability_smallness}
C_P L_F < 1,
\end{equation}
with the gap \(1-C_P L_F\) bounded below independently of \(N\). Then there
exists a constant \(C>0\) depending only on \(C_P\) and \(L_F\), such that the error \(w\) satisfies the discrete \(H^2\)-stability estimate
\begin{equation}
\label{eq:discrete_H2_stability}
\|w\|_{H_h^2}
\leq
\|w\|_{\ell_h^2}
+
\|\nabla_h w\|_{\ell_h^2}
+
\|\Delta_h w\|_{\ell_h^2}
\leq
C\left(\|r\|_{\ell_h^2}+\left(\bigl|(w^1)\bigr|+\bigl|(w^2)\bigr|\right)\right),
\end{equation}
where
$$
C=\frac{1}{1-C_P L_F} \max \left\{1+\sqrt{C_P}+C_P, \sqrt{2 \pi}\left[1+\left(1+\sqrt{C_P}\right) L_F\right]\right\}.
$$
\end{theorem}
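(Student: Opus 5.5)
The plan is to mimic the proof of Proposition \ref{lem:stability} at the discrete level. I will use the two ingredients the statement refers to. The first is the two-layer Poincaré inequality of Lemma \ref{lem:two_layer_poincare}, which I take in the form
\[
\|w-(w)\|_{\ell_h^2}^2\le C_P\|\nabla_h w\|_{\ell_h^2}^2,\qquad \|\nabla_h w\|_{\ell_h^2}^2\le C_P\|\Delta_h w\|_{\ell_h^2}^2 ,
\]
where $(w)$ denotes the layerwise mean. The second inequality holds because $\nabla_i^+w^i$ has zero mean by periodicity, together with a summation by parts $\|\nabla_h w\|^2=-\langle w,\Delta_h w\rangle$. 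The second ingredient is the Lipschitz bound of Lemma \ref{lem:potential_estimate}, $\|F(u_d)-F(u_c)\|_{\ell_h^2}\le L_F\|u_d-u_c\|_{\ell_h^2}$. This applies because both $u_c$ and $u_d$ lie in the uniformly bounded set given by \eqref{eq:uc_bound} and \eqref{eq:ud_bound}.

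\textbf{Step 1.} Take the $\ell_h^2$ norm of the error equation \eqref{eq:error_equation} and apply the triangle inequality to get
\[
\|\Delta_h w\|_{\ell_h^2}\le L_F\|w\|_{\ell_h^2}+\|r\|_{\ell_h^2}.
\]

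\textbf{Step 2.} Split $w$ into its mean part and its zero-mean part. Since $\|(w^i)\|_{\ell^2_i}=\sqrt{2\pi}\,|(w^i)|$,
\[
\|w\|_{\ell_h^2}\le \|w-(w)\|_{\ell_h^2}+\sqrt{2\pi}\bigl(|(w^1)|+|(w^2)|\bigr).
\]
Chaining the Poincaré inequalities gives $\|w-(w)\|_{\ell_h^2}\le \sqrt{C_P}\|\nabla_h w\|_{\ell_h^2}\le C_P\|\Delta_h w\|_{\ell_h^2}$. Here I assume the normalization of $C_P$ is such that these two steps compose to $C_P$. If Lemma \ref{lem:two_layer_poincare} is stated as a single combined bound, I would use it directly.

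\textbf{Step 3.} Insert Step 2 into Step 1:
\[
\|\Delta_h w\|\le C_PL_F\|\Delta_h w\|+\sqrt{2\pi}L_F\,M+\|r\|,\qquad M:=|(w^1)|+|(w^2)| .
\]
The smallness condition \eqref{eq:discrete_stability_smallness} lets me absorb the first term on the right, giving
\[
\|\Delta_h w\|\le (1-C_PL_F)^{-1}\bigl(\|r\|+\sqrt{2\pi}L_F M\bigr).
\]
Then $\|\nabla_h w\|\le\sqrt{C_P}\|\Delta_h w\|$, and $\|w\|\le C_P\|\Delta_h w\|+\sqrt{2\pi}M$. Summing the three bounds, the coefficient of $\|r\|$ is $(1+\sqrt{C_P}+C_P)/(1-C_PL_F)$. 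The coefficient of $M$ is
\[
\sqrt{2\pi}\Bigl[1+\frac{(1+\sqrt{C_P}+C_P)L_F}{1-C_PL_F}\Bigr]=\frac{\sqrt{2\pi}\,[1+(1+\sqrt{C_P})L_F]}{1-C_PL_F}.
\]
This matches the stated constant, since $1-C_PL_F+C_PL_F=1$. Finally, $\|w\|_{H_h^2}$ is bounded by the additive sum, as noted earlier.

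The main obstacle is not the algebra but making sure the cited lemmas apply in exactly the needed form. First, the Lipschitz estimate for $F$ must be an $\ell_h^2$-to-$\ell_h^2$ bound across the two layers, which are coupled through the shifted indices $n+m$ on grids of different sizes. It must also hold uniformly on the bounded set containing both $u_c$ and $u_d$. Second, the Poincaré constant must control the two-step chain $\|w-(w)\|\le C_P\|\Delta_h w\|$. If Lemma \ref{lem:two_layer_poincare} only gives the first-order inequality, I would derive the second-order one separately from $\|\nabla_h w\|^2\le\|w-(w)\|\,\|\Delta_h w\|$, which follows from summation by parts and Cauchy--Schwarz.
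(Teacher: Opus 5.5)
Your proposal is correct and follows essentially the paper's proof. Both use the same bound $\|\Delta_h w\|_{\ell_h^2}\le L_F\|w\|_{\ell_h^2}+\|r\|_{\ell_h^2}$, the same chained Poincar\'e estimates (the paper gets $\|\nabla_h w\|_{\ell_h^2}^2\le C_P\|\Delta_h w\|_{\ell_h^2}^2$ by applying Lemma~\ref{lem:two_layer_poincare} to the zero-mean $\nabla_i^- w^i$), and absorption via $C_PL_F<1$; the only difference is that you absorb in the inequality for $\|\Delta_h w\|_{\ell_h^2}$, whereas the paper absorbs in the one for $\|w\|_{\ell_h^2}$ and back-substitutes, which gives identical constants, as your algebra confirms.
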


The proof follows the error equation \eqref{eq:error_equation}. The constants \(L_F\) and \(C_P\) appearing in the statement are defined in the two lemmas below. The first ingredient is a force estimate \(F(u_d)-F(u_c)\), which uses the uniform boundedness of both states. The second ingredient is a two-layer discrete Poincar\'e inequality, which converts control of \(\Delta_h w\) into control of \(\nabla_h w\) and \(w\), up to the layer means. The smallness condition \(C_P L_F<1\) then allows the \(\ell_h^2\)-term to be absorbed.

To state the force estimate, we connect the sampled continuum state $u_c$ and the exact discrete state $u_d$ by the line segment
\[
z_t := u_c+t(u_d-u_c), \qquad t\in[0,1].
\]
The following lemma is the point where the uniform boundedness results for
\(u_c\) and \(u_d\) enter.

\begin{lemma} \label{lem:potential_estimate}
Assume that \(V''\) satisfies the polynomial decay condition
\eqref{eq:potential_decay}. Let \(u_c\) and \(u_d\) be the normalized states defined above. Then for \(i=1,2\), 
\begin{equation} \label{Assumption}
\max _{i=1,2} \sup _{N \in \mathbb{N}} \sup _{n \in I_i} \sum_{m \in \mathbb{Z}} \sup _{t \in[0,1]}\left|V''\left(A_{n, m}^i\left[z_t\right]\right)\right| \leq 13 C_V\left(1+\frac{8 \pi}{3}+42 \pi C_V^{1 / 2}+416 \pi^2 C_V\right)
\end{equation}
is uniformly bounded, and consequently, the force \(F\) admits a uniform Lipschitz bound with Lipschitz constant $L_F$:
\begin{equation} \label{eq:force_lipschitz}
\left\|F(u_d)-F(u_c)\right\|_{\ell_h^2} \leq L_F \left\|u_d-u_c\right\|_{\ell_h^2},
\end{equation}
where
$$
L_F:= 4\sqrt{2}\max _{i=1,2} \sup _{N \in \mathbb{N}} \sup _{n \in I_i} \sum_{m \in \mathbb{Z}} \sup _{t \in[0,1]}\left|V''\left(A_{n, m}^i\left[z_t\right]\right)\right|.
$$
\end{lemma}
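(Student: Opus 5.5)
The argument has two parts: first the uniform bound \eqref{Assumption} on the summed second derivatives, then the Lipschitz estimate \eqref{eq:force_lipschitz} obtained from the mean value theorem along the segment $z_t$.

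\textbf{Uniform bound on $V''$.} Since $z_t$ is a convex combination of $u_c$ and $u_d$, the triangle inequality applied to the sup norms gives
\[
\|z_t^1\|_{\infty,1}+\|z_t^2\|_{\infty,2}\le \max\bigl\{\|u_c^1\|_{\infty,1}+\|u_c^2\|_{\infty,2},\ \|u_d^1\|_{\infty,1}+\|u_d^2\|_{\infty,2}\bigr\}.
\]
By \eqref{eq:uc_bound} and \eqref{eq:ud_bound}, this is at most $2\pi+42\pi\sqrt{C_V}+416\pi^2C_V+\tfrac{2\pi}{3}$, i.e.\ $\tfrac{8\pi}{3}+42\pi\sqrt{C_V}+416\pi^2 C_V$, uniformly in $t\in[0,1]$ and $N$.

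I would then rerun the argument of Lemma \ref{lem:nf_split} with $G=V''$, with one change: the supremum over $t$ is taken inside the sum over $m$. The near--far splitting in that lemma depends on $u$ only through a bound on $\|u^1\|_{\infty,1}+\|u^2\|_{\infty,2}$. So for fixed $m$, I bound $\sup_{n,t}|V''(A^i_{n,m}[z_t])|$ by the decay envelope $C_V(1+|x|)^{-r-2}$, evaluated on $|x|\ge |2\pi m h/h_i|-|(-1)^j hn|-(\text{sup-norm bound})$. Summing over $m$ then gives \eqref{Assumption} with the constant $13C_V(1+M)$, where $M$ is the sup-norm bound above. This bound is independent of $n$, $N$ and $i$.

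\textbf{Lipschitz estimate.} Fix $i$ and $n$. By the fundamental theorem of calculus applied to $t\mapsto V'(A^i_{n,m}[z_t])$, and since $A^i_{n,m}$ is affine in $u$,
\[
F^i_n(u_d)-F^i_n(u_c)=2\sum_m \int_0^1 V''(A^i_{n,m}[z_t])\Bigl(\tfrac{h}{h_j}w^i_n-\tfrac{h}{h_i}w^j_{n+m}\Bigr)\,dt.
\]
Note that $h/h_j\le 2$. Writing $S:=\sup_{n}\sum_m\sup_t|V''(A^i_{n,m}[z_t])|$, this gives the pointwise bound
\[
|F^i_n(u_d)-F^i_n(u_c)|\le 4\sum_m a_{n,m}\bigl(|w^i_n|+|w^j_{n+m}|\bigr),
\qquad a_{n,m}:=\sup_t|V''(A^i_{n,m}[z_t])|.
\]

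\textbf{Main obstacle: the $\ell^2$ step.} The cross term $\sum_m a_{n,m}|w^j_{n+m}|$ mixes the two grids, and the indices $n+m$ must be read modulo $N_j$. I would handle it by Schur's test. Cauchy--Schwarz gives
\[
\Bigl(\sum_m a_{n,m}|w^j_{n+m}|\Bigr)^2\le S\sum_m a_{n,m}|w^j_{n+m}|^2.
\]
Summing over $n\in I_i$ then requires the column bound $\sup_{k}\sum_{n: n+m\equiv k}a_{n,m}\lesssim S$. I expect this to hold by the same near--far argument, since the decay envelope depends on $n$ and $m$ essentially only through the distance between atom positions, so the argument is symmetric in the roles of $n$ and $n+m$. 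The mismatch of the grid weights, $h_i$ against $h_j$, costs at most a factor of order $h_i/h_j\le 2$; I plan to absorb it together with the constant $4$ into the combination of the two layers that produces $\sqrt{2}$. If the column bound turns out to be delicate, the fallback is to estimate the column sums directly from the envelope, which yields the same constant $13C_V(1+M)$.

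Combining the diagonal term $4S|w^i_n|$ with the cross term, squaring, and summing over $i=1,2$ gives $\|F(u_d)-F(u_c)\|_{\ell^2_h}\le 4\sqrt{2}\,S\,\|w\|_{\ell^2_h}$. This is \eqref{eq:force_lipschitz} with $L_F=4\sqrt2\max_i\sup_N S$, which is finite by \eqref{Assumption}.
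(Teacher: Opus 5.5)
\textbf{Verdict: genuine gap.} Your skeleton matches the paper's proof. It has the convexity bound on $z_t$, Lemma~\ref{lem:nf_split} rerun with $G=V''$, the mean value theorem along $z_t$, weighted Cauchy--Schwarz, and an interchange of sums. But the step you flag as the main obstacle is left as an expectation, and that is exactly the step that produces the stated constant.

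\textbf{The column bound.} Write $\mathcal M_V$ for the left-hand side of \eqref{Assumption}. The missing idea is an exact action--reaction identity: for every configuration $z$, every $n\in I_i$, $p\in I_j$ and $q\in\mathbb Z$,
\[
A^i_{n,\,p-n+qN_j}[z]=-A^j_{p,\,n-p-qN_i}[z].
\]
You can check this directly from \eqref{eq:argument} using $z^j_{p+qN_j}=z^j_p$, $z^i_{n-qN_i}=z^i_n$ and $2\pi h/h_i=hN_i$. Two bijections make it usable:
\begin{itemize}
\item for fixed $n$, the map $(p,q)\mapsto p-n+qN_j$ is a bijection $I_j\times\mathbb Z\to\mathbb Z$;
\item for fixed $p$, the map $(n,q)\mapsto n-p-qN_i$ is a bijection $I_i\times\mathbb Z\to\mathbb Z$.
\end{itemize}
Together with the evenness of $V''$, these turn the column sum attached to $w^j_p$ into the row sum $\sum_{m'\in\mathbb Z}\sup_t|V''(A^j_{p,m'}[z_t])|$ of the \emph{other} layer. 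That row sum is bounded by $\mathcal M_V$ exactly. This is why $L_F$ carries a $\max_i$, and why your layer-$i$ quantity $S$ is not the right bound for column sums.

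Your fallback, estimating column sums from the decay envelope, only bounds them by the a priori constant $13C_V(1+M)$. With an envelope depending on $|m|$ alone it even double-counts one residue class when $i=2$, because $(n,q)\mapsto k-n+qN_j$ is not injective when $N_i>N_j$. Schur's test then gives a Lipschitz constant of order $\sqrt{\mathcal M_V\cdot 13C_V(1+M)}$. That is a valid but weaker estimate than \eqref{eq:force_lipschitz} with $L_F=4\sqrt2\,\mathcal M_V$ as defined.

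\textbf{The numerical constant.} Your bookkeeping also does not close as written. You use $h/h_j\le 2$ in both coefficients, then $(a+b)^2\le 2a^2+2b^2$, Cauchy--Schwarz, and the weight mismatch $h_i/h_j\le\tfrac32$. This gives
\[
\|F(u_d)-F(u_c)\|_{\ell_h^2}^2\le 32\bigl(1+\tfrac32\bigr)\mathcal M_V^2\|w\|_{\ell_h^2}^2=80\,\mathcal M_V^2\|w\|_{\ell_h^2}^2,
\]
that is, the constant $4\sqrt5$ rather than $4\sqrt2$.

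To reach $4\sqrt2$ you must keep the exact ratios $h/h_1=\tfrac{2N}{2N+1}$ and $h/h_2=\tfrac{2(N+1)}{2N+1}\le\tfrac65$. The diagonal term then contributes $8(h/h_j)^2\mathcal M_V^2\|w^i\|^2_{\ell^2_i}$. The cross term contributes $8\tfrac{h^2}{h_ih_j}\mathcal M_V^2\|w^j\|^2_{\ell^2_j}$, with $\tfrac{h^2}{h_1h_2}=\tfrac{4N(N+1)}{(2N+1)^2}<1$. After summing over both layers, each $\|w^i\|^2_{\ell^2_i}$ carries a coefficient at most $8\bigl(\tfrac{36}{25}+1\bigr)\mathcal M_V^2<32\,\mathcal M_V^2$.
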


The remaining ingredient for the stability estimate is the following two-layer discrete Poincar\'e--Wirtinger inequality.
\begin{lemma}[Two-layer discrete Poincaré--Wirtinger inequality] \label{lem:two_layer_poincare} 
Let \(u=(u^1,u^2)\in \R^N\times \R^{N+1}\), with periodic conventions \(u^{\hi}_{n+N_{\hi}}=u^{\hi}_n\). Then, for each \({\hi}=1,2\), 
\begin{equation} \label{eq:Poincare_layerwise}
\left\|u^{\hi}\right\|^2_{\ell^2_{{\hi}}} \leq C_{P, {\hi}}\left\|\nabla_{\hi}^{+} u^{\hi}\right\|^2_{\ell^2_{{\hi}}}+2 \pi\left|\left(u^{\hi}\right)\right|^2, \quad C_{P, {\hi}} = \frac{h_{\hi}^2}{4 \sin ^2\left(h_{\hi} / 2\right)}
\end{equation}
Consequently, with $C_P:=C_{P,1}$, one obtains \(1 \leq C_P \leq \frac{\pi^2}{4}\) for $N \geq 2$ and 
\[ \|u\|_{\ell_h^2}^2 \le C_P\left( \|\nabla^+_{1}u^1\|_{\ell^2_1}^2 + \|\nabla^+_{2}u^2\|_{\ell^2_2}^2 \right) +2\pi \left(\left(u^1\right)^2+\left(u^2\right)^2\right) . \] 
\end{lemma}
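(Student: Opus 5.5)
The plan is to prove the layerwise estimate \eqref{eq:Poincare_layerwise} by diagonalizing the periodic forward difference with the discrete Fourier transform. The two-layer statement and the bounds on $C_P$ then follow from elementary monotonicity of $x\mapsto x/\sin x$.

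\textbf{Step 1 (Fourier expansion).} Fix $\hi$ and write $u^{\hi}_n=\sum_{k=0}^{N_{\hi}-1}\hat u_k e^{ik x^{\hi}_n}$ with $x^{\hi}_n=nh_{\hi}$. This is legitimate since $\{e^{ikx_n^{\hi}}\}_{k=0}^{N_{\hi}-1}$ is an orthogonal basis of $N_{\hi}$-periodic sequences, because $h_{\hi}N_{\hi}=2\pi$. Discrete Parseval gives $\|u^{\hi}\|_{\ell^2_{\hi}}^2=2\pi\sum_k|\hat u_k|^2$, and $\hat u_0=(u^{\hi})$. Since $\nabla^+_{\hi}e^{ikx_n}=\frac{e^{ikh_{\hi}}-1}{h_{\hi}}e^{ikx_n}$, we get
\[
\|\nabla_{\hi}^+u^{\hi}\|_{\ell^2_{\hi}}^2=2\pi\sum_{k=0}^{N_{\hi}-1}\frac{4\sin^2(kh_{\hi}/2)}{h_{\hi}^2}|\hat u_k|^2 .
\]

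\textbf{Step 2 (spectral gap).} For $1\le k\le N_{\hi}-1$ we have $kh_{\hi}/2\in(0,\pi)$. The function $\sin^2$ is symmetric about $\pi/2$ on this interval, and the admissible points $kh_{\hi}/2$ are symmetric under $k\mapsto N_{\hi}-k$. Hence the minimum of $\sin^2(kh_{\hi}/2)$ over these $k$ is attained at $k=1$ (equivalently $k=N_{\hi}-1$) and equals $\sin^2(h_{\hi}/2)$. Therefore
\[
2\pi\sum_{k\ne0}|\hat u_k|^2\le \frac{h_{\hi}^2}{4\sin^2(h_{\hi}/2)}\|\nabla^+_{\hi}u^{\hi}\|^2_{\ell^2_{\hi}} .
\]
Adding $2\pi|\hat u_0|^2=2\pi|(u^{\hi})|^2$ yields \eqref{eq:Poincare_layerwise}. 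This is the only step with genuine content, and the point requiring care is exactly this identification of the smallest nonzero symbol; the rest is bookkeeping.

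\textbf{Step 3 (constants and summation).} Set $\phi(x)=x/\sin x$, so that $C_{P,\hi}=\phi(h_{\hi}/2)^2$. On $(0,\pi/2]$ the function $\phi$ is increasing and satisfies $\phi\ge1$, because $\sin x\le x$ and $(x/\sin x)'=(\sin x-x\cos x)/\sin^2x\ge0$. Since $h_2<h_1$, this gives $C_{P,2}\le C_{P,1}=C_P$ and $C_P\ge1$. For $N\ge2$ we have $h_1/2=\pi/N\le\pi/2$, hence $C_P\le\phi(\pi/2)^2=\pi^2/4$. Finally, bound $C_{P,2}$ by $C_P$ in the layer-$2$ inequality and add the two layerwise inequalities to obtain the two-layer estimate.
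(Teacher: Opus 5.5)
Your proof is correct and complete. The paper states this lemma without proof (none of the appendices covers it), but the constant $C_{P,i}=h_i^2/(4\sin^2(h_i/2))$ is exactly the reciprocal of the smallest nonzero symbol $4\sin^2(h_i/2)/h_i^2$ of $-\nabla_i^-\nabla_i^+$, so your discrete Fourier diagonalization is precisely the argument the statement presupposes. Two small ingredients are used without being stated. In Step 2 you also need that $\sin^2$ is increasing on $(0,\pi/2]$; together with the symmetry $k\mapsto N_i-k$ this gives $\sin^2(kh_i/2)\ge\sin^2(h_i/2)$. In Step 3 the monotonicity of $x/\sin x$ follows from $(\sin x-x\cos x)'=x\sin x\ge 0$.
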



\begin{proof}[Proof of Theorem \ref{thm:discrete_H2_stability}]
Take the product \(\ell_h^2\)-norm of the error equation \eqref{eq:error_equation} and use the force Lipschitz bound \eqref{eq:force_lipschitz}, one obtains
\begin{equation}
\label{eq:laplacian_error_bound}
\|\Delta_h w\|_{\ell_h^2}
\leq
\|F(u_d)-F(u_c)\|_{\ell_h^2}
+
\|r\|_{\ell_h^2}
\leq
L_F\|w\|_{\ell_h^2}
+
\|r\|_{\ell_h^2}.
\end{equation}

Next we apply the Poincaré inequality \eqref{eq:Poincare_layerwise} to \(w^i\) and summing over the two layers gives
\begin{equation}
\label{eq:poincare_for_error}
\|w\|_{\ell_h^2}^2
\leq
C_P\|\nabla_h w\|_{\ell_h^2}^2
+
2\pi \left(\bigl|(w^1)\bigr|^2+\bigl|(w^2)\bigr|^2\right) .
\end{equation}
Since \(\nabla^-_{\hi}w^i\) has zero mean by periodicity, another application
of \eqref{eq:Poincare_layerwise} gives
\[
\|\nabla^-_{\hi}w^i\|_{\ell_i^2}^2
\leq
C_{P,i}
\|\nabla^+_{\hi}\nabla^-_{\hi}w^i\|_{\ell_i^2}^2
=
C_{P,i}
\|\Delta_{\hi}w^i\|_{\ell_i^2}^2 .
\]
we obtain
\begin{equation}
\label{eq:gradient_laplacian_consequence}
\|\nabla_h w\|_{\ell_h^2}^2
\leq
C_P\|\Delta_h w\|_{\ell_h^2}^2 .
\end{equation}
Combining \eqref{eq:poincare_for_error} and
\eqref{eq:gradient_laplacian_consequence}, we get
\[
\|w\|_{\ell_h^2}^2
\leq
C_P^2\|\Delta_h w\|_{\ell_h^2}^2
+
2\pi \left(\bigl|(w^1)\bigr|^2+\bigl|(w^2)\bigr|^2\right),
\]
and hence
\begin{equation}
\label{eq:w_laplacian_mean}
\|w\|_{\ell_h^2}
\leq
C_P\|\Delta_h w\|_{\ell_h^2}
+
\sqrt{2\pi}\left(\bigl|(w^1)\bigr|+\bigl|(w^2)\bigr|\right).
\end{equation}
Substituting \eqref{eq:laplacian_error_bound} into
\eqref{eq:w_laplacian_mean} yields
\[
\|w\|_{\ell_h^2}
\leq
C_P L_F\|w\|_{\ell_h^2}
+
C_P\|r\|_{\ell_h^2}
+
\sqrt{2\pi}\left(\bigl|(w^1)\bigr|+\bigl|(w^2)\bigr|\right).
\]
By the smallness condition \eqref{eq:discrete_stability_smallness}, we can absorb
the first term on the right-hand side and obtain
\begin{equation}
\label{eq:w_stability_bound}
\|w\|_{\ell_h^2} \leq \frac{C_P}{1-C_P L_F}\|r\|_{\ell_h^2}+\frac{\sqrt{2 \pi}}{1-C_P L_F}\left(\left|\left(w^1\right)\right|+\left|\left(w^2\right)\right|\right) .
\end{equation}
From \eqref{eq:laplacian_error_bound} 
\begin{equation}
\label{eq:laplacian_stability_bound}
\left\|\Delta_h w\right\|_{\ell_h^2} \leq \frac{1}{1-C_P L_F}\|r\|_{\ell_h^2}+\frac{\sqrt{2 \pi} L_F}{1-C_P L_F}\left(\left|\left(w^1\right)\right|+\left|\left(w^2\right)\right|\right) .
\end{equation}
From \eqref{eq:gradient_laplacian_consequence}
\begin{equation}
\label{eq:gradient_stability_bound}
\left\|\nabla_h w\right\|_{\ell_h^2} \leq \frac{\sqrt{C_P}}{1-C_P L_F}\|r\|_{\ell_h^2}+\frac{\sqrt{2 \pi C_P} L_F}{1-C_P L_F}\left(\left|\left(w^1\right)\right|+\left|\left(w^2\right)\right|\right) .
\end{equation}
Adding the three estimates gives
$$
\begin{aligned}
\|w\|_{\ell_h^2}+\left\|\nabla_h w\right\|_{\ell_h^2}+\left\|\Delta_h w\right\|_{\ell_h^2} & \leq \frac{1+\sqrt{C_P}+C_P}{1-C_P L_F}\|r\|_{\ell_h^2} \\
&+\frac{\sqrt{2 \pi}\left[1+\left(1+\sqrt{C_P}\right) L_F\right]}{1-C_P L_F}\left(\left|\left(w^1\right)\right|+\left|\left(w^2\right)\right|\right).
\end{aligned}
$$
By taking $C$ as
$$
C=\frac{1}{1-C_P L_F} \max \left\{1+\sqrt{C_P}+C_P, \sqrt{2 \pi}\left[1+\left(1+\sqrt{C_P}\right) L_F\right]\right\},
$$
we obtain the desired $H^2$-stability estimate \eqref{eq:discrete_H2_stability}.
\end{proof}

\section{Consistency of the Euler--Lagrange equation and convergence}
We now show that the sampled continuum solution \(u_c\) is an approximate
solution of the discrete Euler--Lagrange equation. Throughout this section we
assume that the continuum potential \(\Phi\) is generated by the atomistic potential
\(V\) through the periodization
\begin{equation}
\label{eq:Phi_periodization}
\Phi(s) := 2\sum_{m\in\Z} V(s-2\pi m).
\end{equation}
The factor \(2\) comes from the two layer contributions in the discrete energy. We have verified numerically the existence of $V$ such that $\Phi$ defined by \eqref{eq:Phi_periodization} satisfies the assumptions of Section \ref{sec:cont_min}. This happens, for example, for Lennard-Jones 12-6 potentials for certain parameter ranges.
\begin{theorem}[Consistency of the discrete Euler--Lagrange equation]
\label{thm:EL_consistency}
Let \(v_0\) be the normalized continuum solution of \eqref{eq:EL_v} with $v_0(0)=0$, and let
\(u_c=(u_c^1,u_c^2)\) be its sampling on the two grids. Let \(r\) be
the residual defined by \eqref{eq:EL_residual}. Then
\begin{equation}
\label{eq:EL_consistency_pointwise}
\left\|r\right\|_{\infty}\leq 1300(C_V^{3/2}+ C_V) h,
\end{equation}
and
\begin{equation}
\label{eq:EL_consistency_l2}
\|r\|_{\ell_h^2}\leq 5000(C_V^{3/2}+ C_V) h.
\end{equation}
\end{theorem}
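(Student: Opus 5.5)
The plan is to compare, layer by layer and pointwise in $n$, the two pieces of $r^i_n=\Delta_i u_c^i(n)-F^i_n(u_c)$ with the continuum Euler--Lagrange equation \eqref{eq:EL_v} evaluated at $x=x_n^i$, and to show each piece differs from its continuum counterpart by $O(h)$. I will treat $i=1$; layer $2$ is identical up to signs.

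First, the continuum identity. Since $u_1=\tfrac12(v_0-x)$, we have $u_1''=\tfrac12 v_0''=\tfrac12W'(v_0)=\Phi'(v_0)$. By \eqref{eq:Phi_periodization} this is $2\sum_m V'(v_0(x)-2\pi m)$, and similarly $u_2''=-\Phi'(v_0)$. Interchanging sum and derivative is justified by the decay \eqref{eq:potential_decay}. Consequently
\[
r^1_n=\bigl[\Delta_1u_c^1(n)-u_1''(x_n^1)\bigr]+2\sum_{m}\bigl[V'(v_0(x^1_n)-2\pi m)-V'(A^1_{n,m}[u_c])\bigr].
\]

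Step 1 (finite-difference truncation). Since $v_0\in C^{2,1}$, Taylor's theorem with Lipschitz remainder gives $|\Delta_1u_c^1(n)-u_1''(x_n^1)|\le \tfrac{h_1}{3}\,\mathrm{Lip}(u_1'')$. We have $u_1'''=\tfrac12W''(v_0)v_0'$, so I need $\|v_0'\|_\infty$. For this I use the first integral from the proof of Proposition \ref{lem:monotone}: $v_0'(x)^2=v_0'(\pi)^2+2(W(v_0)-W(\pi))$. The value $v_0'(\pi)$ is controlled because $v_0'$ has mean $1$ and $v_0'^2-v_0'(\pi)^2\le 4\,\mathrm{osc}\,\Phi\lesssim C_V$. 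This yields $\|v_0'\|_\infty\lesssim 1+C_V^{1/2}$, and therefore $\mathrm{Lip}(u_1'')\lesssim C_V(1+C_V^{1/2})$. This is the source of the $C_V^{3/2}$.

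Step 2 (argument comparison). Expand
\[
A^1_{n,m}[u_c]=hn+\tfrac{h}{h_2}u_1(nh_1)-\tfrac{h}{h_1}u_2((n+m)h_2)-2\pi\tfrac{h}{h_1}m
\]
and compare it with $v_0(x_n^1)-2\pi m=nh_1+u_1(nh_1)-u_2(nh_1)-2\pi m$. The four discrepancies are as follows.

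\begin{enumerate}
\item $n(h-h_1)$: a direct computation gives $h-h_1=-2\pi/(N(2N+1))=O(h^2)$, so with $n\le N$ this is $O(h)$.
\item $(h/h_2-1)u_1$ and $(h/h_1-1)u_2$: these are $O(h)$ by \eqref{eq:uc_bound}.
\item $2\pi m(h/h_1-1)$: this is $O(|m|h)$.
\item The shift $u_2((n+m)h_2)-u_2(nh_1)$: the points differ by $mh_2-nh_1/(N+1)$, so by the Lipschitz bound on $u_2$ from Step 1 this is $O((1+|m|)h)$.
\end{enumerate}

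Hence $|A^1_{n,m}[u_c]-(v_0(x_n^1)-2\pi m)|\le C(1+C_V^{1/2})(1+|m|)h$.

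Step 3 (summation over $m$). By the mean value theorem, each term of the sum is bounded by $\sup|V''|$ on the segment joining the two arguments, times the bound from Step 2. Both endpoints equal $-2\pi m$ plus a quantity bounded independently of $N$ and $m$ (since $|v_0|\le\pi$ and $(1+|m|)h\lesssim 1$ for the relevant range). For those $|m|$ where $(1+|m|)h$ is not small I simply use $|V'|\le C_V(1+|x|)^{-r-1}$ directly. Then \eqref{eq:potential_decay} with $r\ge2$ gives
\[
\sum_m (1+|m|)\sup|V''|\lesssim C_V\sum_m(1+|m|)^{-3}<\infty,
\]
in the spirit of the near--far splitting of Lemma \ref{lem:nf_split}. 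Combining Steps 1--3 yields \eqref{eq:EL_consistency_pointwise}. Then \eqref{eq:EL_consistency_l2} follows from $\|r\|_{\ell_h^2}\le\sqrt{4\pi}\,\|r\|_\infty$.

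I expect the main obstacle to be Step 2: tracking the moir\'e bookkeeping so that the accumulated mismatch $n(h-h_1)$ and the index shift $n+m$ on the other lattice each produce only $O((1+|m|)h)$, and never $O(1)$. Obtaining explicit $N$-independent constants in the $m$-sum, rather than merely finite ones, is the secondary difficulty.
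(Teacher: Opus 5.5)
Your strategy is the paper's. It combines the truncation error of $\Delta_i$ (the $C^{2,1}$ Taylor remainder, with $\|v_0'\|_\infty$ from the first integral) with an $O((1+|m|)h)$ argument discrepancy, pushed through the mean value theorem and a decay-weighted $V''$ sum. The only structural difference is that you compare $A^1_{n,m}[u_c]$ directly with $v_0(x^1_n)-2\pi m$. The paper instead first passes to the localized argument $B^i_{n,m}[u_c]$, resums it exactly to $\pm\Phi'(s_n)$, and then handles the phase error $s_n-v_0(x^i_n)$ using $\mathrm{Lip}(\Phi')\le 4C_V$. Merging these two steps is harmless.

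The step that does not yet give the stated bound is the localization in Step 3. The theorem claims $1300(C_V^{3/2}+C_V)h$ with no smallness assumption on $C_V$, so the region containing the mean-value segment must be controlled \emph{independently of $C_V$}. You localize with the Step 2 estimate $C(1+C_V^{1/2})(1+|m|)h$, and this fails in either of two ways.
\begin{enumerate}
\item If your threshold for ``$(1+|m|)h$ small'' is a universal constant, the offset is only $O(1+C_V^{1/2})$. The near-field window then grows like $1+C_V^{1/2}$, so $\sum_m(1+|m|)\sup|V''|$ is of order $C_V(1+C_V)$ rather than $C_V$.
\item If you shrink the threshold to $(1+C_V^{1/2})^{-1}$, the directly estimated tail contributes order $C_V(1+C_V)h^2$.
\end{enumerate}
Either way you end up with a higher power of $C_V$ than $3/2$.

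The fix is what the paper's remark about the ``cleaner uniform bound'' for $u_c$ is doing: use two different bounds on the same discrepancy.
\begin{enumerate}
\item For its \emph{length}, use the Lipschitz bound on $u_2$; this gives the factor $1+C_V^{1/2}$.
\item For \emph{localization}, use the $C_V$-free bound $|u_i|\le\pi/2$, which follows from $|v_0(x)-x|\le\pi$ on $[-\pi,\pi]$. It gives $|u_2((n+m)h_2)-u_2(nh_1)|\le\pi$. Together with $2\pi|m|\,|h/h_1-1|=\tfrac h2|m|\le\tfrac{2\pi}{5}|m|$, every point of the segment then has modulus at least $c|m|-C$ with universal $c,C$. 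This is the analogue of the paper's bound $|C^i_{n,m,\theta}|\ge\frac{4\pi}{3}|m|-5\pi$.
\end{enumerate}
With this, $\sum_m(1+|m|)\sup_{\mathrm{seg}}|V''|\le C\,C_V$ uniformly, no split in $m$ is needed, and the product yields the $(C_V+C_V^{3/2})h$ form.

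The numerical constants must still be tracked to reach $1300$. Your reduction $\|r\|_{\ell_h^2}\le\sqrt{4\pi}\,\|r\|_\infty$ is the right one for the second estimate, since $\sqrt{4\pi}\cdot 1300<5000$. A minor point: at grid points $x^1_n\in(\pi,2\pi)$ one has $v_0(x^1_n)\in(\pi,2\pi)$, so use $|v_0(x^1_n)|<2\pi$ rather than $|v_0|\le\pi$ there; this does not affect anything.
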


\begin{proof}
First by the decay assumption \eqref{eq:potential_decay} and the periodization \eqref{eq:Phi_periodization}, the series for \(\Phi,\Phi',\Phi''\) converge uniformly and one obtains a convenient relaxed bound as:
\begin{equation}
\label{eq:Phi_bound}
\|\Phi\|_{L^{\infty}},\left\|\Phi'\right\|_{L^{\infty}},\left\|\Phi''\right\|_{L^{\infty}} \leq 4 C_V .
\end{equation}
Now use the first integral for \eqref{eq:EL_v}:
$$
\left(v_0'(x)\right)^2=\left(v_0'(\pi)\right)^2+4\left(\Phi\left(v_0(x)\right)-\Phi(\pi)\right) .
$$
Since $v_0'>0, v_0'(\pi)$ is its minimum in $[-\pi,\pi]$, while
$$
\frac{1}{2 \pi} \int_{-\pi}^\pi v_0'(x) d x=1
$$
so $v_0'(\pi) \leq 1$. Consequently,
\begin{equation}
\label{eq:v0_derivative_bound}
\left\|v_0'\right\|_{L^{\infty}}^2 \leq 1+8 \|\Phi\|_{L^{\infty}} \leq 1+32 C_V .
\end{equation}
Recall that the continuum pair is defined by
\[
u_1(x)=\frac{v_0(x)-x}{2},
\qquad
u_2(x)=\frac{x-v_0(x)}{2},
\]
so that
\[
v_0(x)=x+u_1(x)-u_2(x).
\]
Apply the above estimate \eqref{eq:v0_derivative_bound} one obtains
\begin{equation}
\label{eq:continuum_pair_derivative_bound}
\left\|\left(u_i\right)'\right\|_{L^{\infty}} \leq 3\left(\sqrt{C_V}+1\right).
\end{equation}
Since \(v_0\) solves \eqref{eq:EL_v}, the continuum pair satisfies
\begin{equation}
\label{eq:continuum_EL_pair}
(u_i)''(x)=(-1)^{3-i}\Phi'(v_0(x)), \qquad i=1,2.
\end{equation}
Combine \eqref{eq:continuum_EL_pair} with \eqref{eq:v0_derivative_bound} and \eqref{eq:Phi_bound} one obtains
\begin{equation}
\label{eq:continuum_pair_lipschitz}
\Lip\left(\left(u_i\right)''\right) \leq\left\|\Phi''\right\|_{L^{\infty}}\left\|v_0'\right\|_{L^{\infty}} \leq 23 (C_V^{3/2}+ C_V) .
\end{equation}

We now compare the full discrete interaction argument \(\Anm[u_c]\) with its
localized version. Define
\begin{equation}
\label{eq:Bnm_consistency}
\Bnm[u_c] := (-1)^{3-i}hn+(u_c^i)_n-(u_c^{3-i})_n-2\pi m, \quad n\in I_i,\ m\in\Z,
\end{equation}
where all indices are understood periodically, as before. Let
\[
\rho_{n,m}^i:=\Anm[u_c]-\Bnm[u_c].
\]
Using \eqref{eq:argument} and \eqref{eq:Bnm_consistency}, we obtain
\[
\begin{aligned}
\rho_{n,m}^i &=(-1)^{3-i}\frac{h}{4\pi}\left((u_c^i)_n+(u_c^{3-i})_n\right)\\
&-\left(1+(-1)^{i}\frac{h}{4\pi}\right)\left((u_c^{3-i})_{n+m}-(u_c^{3-i})_n\right)
-
2\pi(-1)^{i}\frac{h}{4\pi} m.
\end{aligned}
\]
Apply \eqref{eq:continuum_pair_derivative_bound} to the sampled continuum pair $u_c$, we obtain
\begin{equation} \label{eq:continuum_pair_difference}
|(u_c^{\hi})_{n+m}-(u_c^{\hi})_n|\leq \|(u_{\hi})'\|_{L^{\infty}}|x_{n+m}^{\hi}-x_n^{\hi}|
\leq 3\left(\sqrt{C_V}+1\right)h_{\hi}|m|,
\quad i=1,2.
\end{equation}
Therefore, combine \eqref{eq:continuum_pair_difference} and \eqref{eq:uc_bound}, one obtains
\begin{equation}
\label{eq:rho_consistency_bound}
\begin{aligned}
\left|\rho_{n, m}^i\right| & \leq \frac{h}{2}+(1+\frac{h}{4\pi}) 3\left(\sqrt{C_V}+1\right)h_{\hj}|m|+\frac{h}{2}|m| \\
& \leq 7\left(\sqrt{C_V}+1\right)(1+|m|) h .
\end{aligned}
\end{equation}

We claim that
\begin{equation}
\label{eq:force_localization_consistency}
\left|
2\sum_{m\in\Z}
\left[
V'(\Anm[u_c])-V'(\Bnm[u_c])
\right]
\right|
\leq 1260\left(C_V+C_V^{3 / 2}\right) h.
\quad i=1,2.
\end{equation}
Set
$$
C_{n, m, \theta}^i:=B_{n, m}^i\left[u_c\right]+\theta \rho_{n, m}^i=(1-\theta) B_{n, m}^i\left[u_c\right]+\theta A_{n, m}^i\left[u_c\right], 
$$
then
$$
\left|C_{n, m, \theta}^i\right| \geq \frac{4 \pi}{3}|m|-5 \pi .
$$
Note that here the bound we used is different from the one we used in the proof of Lemma \ref{lem:approx_symmetry}. This is due to the fact that we are now considering the sampled continuum pair \(u_c\), which has a cleaner uniform bound \eqref{eq:uc_bound}. For $|m| \geq 9$,
$$
\left|C_{n, m, \theta}^i\right| \geq \frac{2 \pi}{3}|m| .
$$
The near-field part $|m| \leq 8$ and the far-field decay give the weighted estimate
$$
\sup _{i, n} \sum_{m \in \mathbb{Z}}(1+|m|) \sup _{\theta \in[0,1]}\left|V''\left(C_{n, m, \theta}^i\right)\right| \leq 90 C_V .
$$
It follows that, by the mean value theorem,
$$
\begin{aligned}
\left|2 \sum_m\left[V'\left(A_{n, m}^i\left[u_c\right]\right)-V'\left(B_{n, m}^i\left[u_c\right]\right)\right]\right| &\leq 2 \sum_m \sup _\theta\left|V''\left(C_{n, m, \theta}^i\right)\right|\left|\rho_{n, m}^i\right|  \\
&\leq 1260\left(C_V+C_V^{3 / 2}\right) h.
\end{aligned}
$$
For \(n\in I_i\), set
\[
s_n := hn+(u_c^1)_n-(u_c^2)_n.
\]
Then
\[
\Bnm[u_c]=(-1)^{3-i}s_n-2\pi m.
\]
Using the evenness of \(V\), so that \(V'\) is odd, and applying the periodization \eqref{eq:Phi_periodization}, we obtain
\[
2\sum_{m\in\Z}V'(\Bnm[u_c]) = (-1)^{3-i}\Phi'(s_n).
\]
Thus, for \(i=1,2\), we have
\begin{equation}
\label{eq:force_to_phi_consistency}
\left|F_n^i\left(u_c\right)-(-1)^{3-i} \Phi'\left(s_n\right)\right| \leq 1260\left(C_V+C_V^{3 / 2}\right) h.
\end{equation}

It remains to estimate the phase error. Compare \(s_n\) with the continuum phase at the grid point \(x_n^i\). Apply \eqref{eq:continuum_pair_derivative_bound}, we obtain
\begin{equation}
\label{eq:phase_consistency}
\begin{aligned}
|s_n-v_0(x_n^i)| & \leq |h-h_i|n + |u_1(x_n^1)-u_1(x_n^i)| + |u_2(x_n^2)-u_2(x_n^i)| \\
& \leq \frac{h}{2} + \left\|\left(u_{\hj}\right)'\right\|_{L^{\infty}}|x_n^{\hj}-x_n^i|  
\leq 4\left(\sqrt{C_V}+1\right) h.
\end{aligned}
\end{equation}
By \eqref{eq:Phi_bound}, \(\Lip{(\Phi')}\leq 4 C_V\), and hence
\begin{equation}
\label{eq:phi_prime_consistency}
\left|\Phi'\left(s_n\right)-\Phi'\left(v_0\left(x_n^i\right)\right)\right| \leq 16(C_V^{3/2}+ C_V) h.
\end{equation}

On the other hand, apply the central-difference expansion with integral remainder and combine with \eqref{eq:continuum_pair_lipschitz}, we obtain
\begin{equation}
\label{eq:continuum_laplacian_consistency}
\left|\Delta_{\hi}u_c^i - (u_i)''(x_n^i)\right| \leq \frac{h_i}{3} \Lip\left(\left(u_i\right)''\right) \leq 12(C_V^{3/2}+ C_V) h.
\end{equation}
Combine \eqref{eq:continuum_EL_pair}, \eqref{eq:force_to_phi_consistency}, \eqref{eq:phi_prime_consistency}, and \eqref{eq:continuum_laplacian_consistency}, we obtain
$$
\sup_{n,i} \left|r^i_n\right| \leq 1300(C_V^{3/2}+ C_V) h.
$$
This proves \eqref{eq:EL_consistency_pointwise}. Finally,
\[
\|r\|_{\ell_h^2} = \sqrt{\sum_{i=1}^2 h_i\sum_{n\in I_i}|r_n^i|^2} \leq 5000(C_V^{3/2}+ C_V) h.
\]
\end{proof}

We can now combine the consistency estimate with the stability estimate to obtain the main convergence result.


\begin{theorem}[Discrete-to-continuum convergence]
\label{thm:discrete_to_continuum_convergence}
For $N \geq 2$, let $V \in C^2(\R)$ be an even potential satisfying the decay condition \eqref{eq:potential_decay}; let the continuum misfit potential $\Phi$ defined by \eqref{eq:Phi_periodization} satisfy condition \eqref{eq:V_DFP}. Assume that there exists $\delta>0$, independent of $N$, such that
$$
1-C_PL_F \ge \delta.
$$
Then for \(u_d\) the symmetry-normalized discrete minimizer and \(u_c\) the sampled normalized continuum minimizer, we have the following convergence result in the discrete \(H^2\) norm:
\begin{equation}
\label{eq:convergence_result}
\|u_d-u_c\|_{\ell_h^2}
+
\|\nabla_h(u_d-u_c)\|_{\ell_h^2}
+
\|\Delta_h(u_d-u_c)\|_{\ell_h^2}
\leq
C_{\mathrm{conv}}\left(C_V, C_P, L_F\right) h,
\end{equation}
in particular,
$$
\|u_d-u_c\|_{H_h^2} \leq C_{\mathrm{conv}}\left(C_V, C_P, L_F\right) h,
$$
where
\begin{equation}
\label{eq:convergence_constant}
\begin{aligned}
C_{\mathrm{conv}}\left(C_V, C_P, L_F\right):=\frac{5000}{1-C_P L_F} & {\left[\left(1+\sqrt{C_P}+C_P\right)\left(C_V+C_V^{3 / 2}\right)\right.} \\
& \left.+\left(1+\left(1+\sqrt{C_P}\right) L_F\right)\right] .
\end{aligned}
\end{equation}
\end{theorem}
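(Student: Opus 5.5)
The proof combines the consistency estimate of Theorem \ref{thm:EL_consistency} with the stability estimate of Theorem \ref{thm:discrete_H2_stability}, applied to the error $w=u_d-u_c$. We must also show that the layer means $|(w^1)|+|(w^2)|$ are $O(h)$.

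\textbf{Checking the hypotheses.} First, $u_d$ is a minimizer of $E_N$ on $\R^N\times\R^{N+1}$ (Theorem \ref{thm:discrete_minimizer}), so it solves \eqref{eq:EL_compact}. It is normalized to lie in $\AN$, and since $E_N[u_d]\le E_N[0]\le C_0$ it lies in $\KN$. Hence the bound \eqref{eq:ud_bound} holds, and Lemma \ref{lem:potential_estimate} supplies the uniform Lipschitz constant $L_F$. Second, $\Phi$ satisfies \eqref{eq:V_DFP}, so Theorem \ref{thm:v} gives the normalized odd minimizer $v_0$. Its sampling $u_c$ satisfies \eqref{eq:uc_bound}, and Theorem \ref{thm:EL_consistency} gives $\|r\|_{\ell_h^2}\le 5000(C_V+C_V^{3/2})h$. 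With $1-C_PL_F\ge\delta$, Theorem \ref{thm:discrete_H2_stability} then yields
\[
\|w\|_{\ell_h^2}+\|\nabla_h w\|_{\ell_h^2}+\|\Delta_h w\|_{\ell_h^2}\le C\bigl(\|r\|_{\ell_h^2}+|(w^1)|+|(w^2)|\bigr).
\]

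\textbf{Controlling the means.} This is the main obstacle. By construction of $\AN$, $|(u_d^i)|\le h/4$. The continuum pair has $u_1=(v_0(x)-x)/2$, which is odd since $v_0$ is odd, and $u_2=-u_1$ is odd as well. The continuous means therefore vanish. The discrete mean $(u_c^i)=\frac{h_i}{2\pi}\sum_{n\in I_i}u_i(nh_i)$ is a periodic trapezoidal (rectangle) rule for a $2\pi$-periodic Lipschitz function. Its error is at most $h_i\|u_i'\|_{L^\infty}$; in fact it is much smaller, since $u_i\in C^{2,1}$ is periodic. Alternatively, the error is exactly zero: $u_i$ is odd and periodic, and the grid $\{nh_i\}$ is symmetric modulo $2\pi$ under $n\mapsto -n$, so the samples cancel in pairs, with $u_i(0)=0$ and, when $N_i$ is even, $u_i(\pi)=0$. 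So $(u_c^i)=0$ and $|(w^i)|\le h/4\le h$.

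A subtlety to address is whether the normalized $u_d$ is the correct representative relative to $u_c$. Both the mean normalization and the fact that $u_c$ has means $0$ are compatible: stability only needs the difference of means to be small, and both are $O(h)$. One might worry that $u_d$ is near a translate $\tau_d v_0$ rather than $v_0$ itself. However, the stability estimate is applied directly to $w$ with a uniform $L_F$ and does not rely on uniqueness, so the smallness condition rules out this issue automatically.

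\textbf{Collecting constants.} Inserting the bounds into the stability estimate with the explicit $C$ gives
\[
\frac{(1+\sqrt{C_P}+C_P)\,5000(C_V+C_V^{3/2})h+\sqrt{2\pi}\bigl[1+(1+\sqrt{C_P})L_F\bigr]\tfrac h2}{1-C_PL_F},
\]
which is dominated by $C_{\mathrm{conv}}h$ as in \eqref{eq:convergence_constant}. The $H_h^2$ statement then follows from the norm equivalence $\|u\|_{H_h^2}\le\|u\|_{\ell_h^2}+\|\nabla_hu\|_{\ell_h^2}+\|\Delta_hu\|_{\ell_h^2}$.
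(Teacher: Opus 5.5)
Your proposal is correct and follows the paper's proof essentially step for step. Both use the residual bound from Theorem \ref{thm:EL_consistency}, the exact vanishing of the sampled means $(u_c^i)$ by pairing $n\leftrightarrow N_i-n$ via oddness and periodicity of $u_i$, the bound $|(u_d^i)|\le h/4$ from the normalization, and then the stability estimates. One small wording point: the expression you display is the sum of the three separate bounds \eqref{eq:w_stability_bound}, \eqref{eq:laplacian_stability_bound}, \eqref{eq:gradient_stability_bound}, exactly as the paper does, rather than the max-form constant $C$ of Theorem \ref{thm:discrete_H2_stability} that you cite, and it is this sum that is dominated by $C_{\mathrm{conv}}h$.
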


\begin{proof}
Let $w:=u_d-u_c$. By Theorem \ref{thm:EL_consistency},
\[
\|r\|_{\ell_h^2}\leq 5000(C_V^{3/2}+ C_V) h.
\]
It remains to estimate the mean term in the stability bound.

Since the normalized continuum minimizer \(v_0\) is odd and satisfies $v_0(x+2\pi)=v_0(x) + 2\pi$, the associated pair
\[
u_1(x)=\frac{v_0(x)-x}{2},
\qquad
u_2(x)=\frac{x-v_0(x)}{2}
\]
are both odd and 2$\pi$-periodic, satisfying $u_i(2 \pi-x)=u_i(-x)=-u_i(x)$. Hence, for $1 \leq n \leq N_i-1$,
$$
u_i\left(x_{N_i-n}^i\right)=u_i\left(2 \pi-x_n^i\right)=-u_i\left(x_n^i\right) .
$$
The index $n=0$ contributes $u_i(0)=0$, and, when $N_i$ is even, the additional fixed index $n=N_i / 2$ contributes $u_i(\pi)=0$. Therefore $\left(u_c^i\right)=0$ for $i=1,2$.

For the symmetry-normalized discrete minimizer, by Corollary \ref{lem:mean_fixing}, we have
\[
|(u_d^i)|\leq \frac{h}{4},
\qquad i=1,2.
\]
So for $w=u_d-u_c$,
$$
\left|\left(w^1\right)\right|+\left|\left(w^2\right)\right|=\left|\left(u_d^1\right)\right|+\left|\left(u_d^2\right)\right| \leq \frac{h}{2} .
$$
Adding the estimates \eqref{eq:w_stability_bound}, \eqref{eq:laplacian_stability_bound}, and \eqref{eq:gradient_stability_bound}, and using the preceding residual and mean bounds, we obtain 
\[
\|w\|_{\ell_h^2} + \|\nabla_h w\|_{\ell_h^2} + \|\Delta_h w\|_{\ell_h^2}
\leq 
C_{\mathrm{conv}}\left(C_V, C_P, L_F\right) h,
\]
where $C_{\mathrm{conv}}$ is defined as in \eqref{eq:convergence_constant}. 
\end{proof}
Using the discrete Sobolev embedding as in Appendix \ref{appendix:lem3_proof}, we can immediately obtain the following convergence result in the $\ell^\infty$ norm.
\begin{corollary}[Convergence in the discrete maximum norm]
\label{cor:convergence_inf_norm}
Under the assumptions of Theorem \ref{thm:discrete_to_continuum_convergence}, we have
$$
\left\|u_d-u_c\right\|_{\infty} \leq C_{\infty}\left(C_V, C_P, L_F\right) h,
$$
where
$$
C_{\infty}\left(C_V, C_P, L_F\right):=\frac{1}{4}+\sqrt{2 \pi} C_{\mathrm{conv}}\left(C_V, C_P, L_F\right)
$$
\end{corollary}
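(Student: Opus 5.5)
We derive the corollary from Theorem \ref{thm:discrete_to_continuum_convergence} via a layerwise discrete Sobolev embedding: the discrete $H^1$ norm controls the maximum norm, up to the layer mean.

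Set $w=u_d-u_c$ and fix a layer $i$. For any $n,k\in I_i$, telescoping gives
\[
w^i_n-w^i_k=\sum_{j=k}^{n-1}h_i\,\nabla_i^+w^i_j ,
\]
so by Cauchy--Schwarz over one period of length $2\pi$,
\[
|w^i_n-w^i_k|\le \sqrt{2\pi}\,\|\nabla_i^+w^i\|_{\ell^2_i}.
\]
Since $(w^i)$ is the weighted average of the $w^i_k$, some index $k$ satisfies $w^i_k\le (w^i)$ and some other satisfies $w^i_k\ge (w^i)$. Hence for every $n$,
\[
|w^i_n|\le |(w^i)|+\sqrt{2\pi}\,\|\nabla_i^+w^i\|_{\ell^2_i}.
\]
This is presumably the embedding referred to in Appendix \ref{appendix:lem3_proof}.

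Next we insert the two pieces of information already available.
\begin{itemize}
\item In the proof of Theorem \ref{thm:discrete_to_continuum_convergence} we showed $(u_c^i)=0$ and $|(u_d^i)|\le h/4$, so $|(w^i)|\le h/4$.
\item The same theorem gives $\|\nabla_i^+w^i\|_{\ell^2_i}\le\|\nabla_h w\|_{\ell_h^2}\le C_{\mathrm{conv}}h$.
\end{itemize}
Taking the maximum over $n$ and over $i=1,2$ then yields
\[
\|w\|_\infty\le \Bigl(\tfrac14+\sqrt{2\pi}\,C_{\mathrm{conv}}\Bigr)h ,
\]
which is exactly $C_\infty h$.

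Two minor points need care. The first is whether to bound $|(w^i)|$ individually by $h/4$ or to use the sum bound $h/2$; the individual bound from Corollary \ref{lem:mean_fixing} gives the stated constant $1/4$. The second is choosing the telescoping path within one period so that the total length is at most $2\pi$; the periodic convention makes this automatic. Beyond these, the argument is routine and presents no real obstacle.
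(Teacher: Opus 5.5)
Your proposal is correct and follows the paper's proof essentially step for step. You use the layerwise discrete Sobolev bound $\|w^i\|_{\infty,i}\le |(w^i)|+\sqrt{2\pi}\,\|\nabla_i^+w^i\|_{\ell^2_i}$, which you derive explicitly by telescoping, Cauchy--Schwarz and the mean-between-min-and-max argument, exactly as in Appendix \ref{appendix:lem3_proof}. You then combine it with $|(w^i)|\le h/4$ (from $(u_c^i)=0$ and Corollary \ref{lem:mean_fixing}) and $\|\nabla_h w\|_{\ell_h^2}\le C_{\mathrm{conv}}h$ from Theorem \ref{thm:discrete_to_continuum_convergence}.
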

\begin{proof}
Set $w:=u_d-u_c$. For each layer $i$, following the proof as in Appendix \ref{appendix:lem3_proof}, the discrete 1D-Sobolev embedding gives
$$
\left\|w^i\right\|_{\infty, i} \leq\left|\left(w^i\right)\right|+\sqrt{2 \pi}\left\|\nabla_{h_i} w^i\right\|_{\ell_i^2} .
$$
The sampled continuum means vanish exactly, while the normalized discrete minimizer satisfies $\left|\left(u_d^i\right)\right| \leq \frac{h}{4}$. Thus
$$
\left|\left(w^i\right)\right|=\left|\left(u_d^i\right)-\left(u_c^i\right)\right| \leq \frac{h}{4} .
$$
Taking the maximum over $i=1,2$,
$$
\|w\|_{\infty} \leq \frac{h}{4}+\sqrt{2 \pi}\left\|\nabla_h w\right\|_{\ell_h^2} \leq \left(\frac{1}{4}+\sqrt{2 \pi} C_{\mathrm{conv}}\right) h,
$$
where the last inequality follows from the convergence theorem \ref{thm:discrete_to_continuum_convergence}. This proves the result.
\end{proof}

\appendix

\section{Derivation of the simplified discrete model} \label{sec:simplified}
We now rewrite the energy \eqref{eq:energy_raw} in a simplified, rescaled form \eqref{eq:energy_discrete}. Start from intralayer terms, reindex the argument by $m'=m-n$ and rescale the displacement field as $\xi_n^{\hi}=\frac{h_{\hi}}{2 \pi} u_n^{\hi}$. Assume there exists a rescaled 1D even potential $\tilde V:\R\to\R$ such that for both layers   
\[
2 h_{\hi}^2 V_{\hi, \hi}(h_{\hi} s) = \tilde V(s), \quad \tilde V(-s) = \tilde V(s),
\]
we get
$$
E^{\hi}_{\mathrm{intra}}[u]=
\frac{1}{2 h_{\hi}}\sum_{n=0}^{N_{\hi}-1}\sum_{m' \in Z} 
\tilde V\!\left(\frac{u_n^{\hi}-u_{n+m'}^{\hi}}{2\pi}-m'\right)
$$
Now assume the unperturbed state is a nondegenerate minimum, which gives $\tilde V'\!\left(-m'\right)=0$. Expand around this minimum and impose the nearest-neighbor truncation, up to an additive constant and a scaling factor, we obtain
$$
E^{\hi}_{\mathrm{intra}}[u]=\frac{h_{\hi}}{2}
\sum_n \left(\frac{u_{n+1}^{\hi}-u_n^{\hi}}{h_{\hi}}\right)^2.
$$
For the potential terms, we first assume the pair potentials are symmetric and even, i.e. $V^{12} = V^{21} = \bar V$. Reindex the argument by $m'=m-n$ and rescale the displacement field as $\xi_n^{\hi}=\frac{h_{\hi}}{2 \pi} u_n^{\hi}$, we obtain for ${\hi},{\hj}\in\{1,2\}, \,{\hi}\neq {\hj}$
$$
E^{\hi}_{\mathrm{inter}} = h_{\hi} \sum_{n=0}^{N_{\hi}-1}\sum_{m' \in \Z}\bar V\left((h_{\hi}-h_{\hj})n+\frac{h_{\hi}}{2\pi}u^{\hi}_n-\frac{h_{\hj}}{2\pi}u^{\hj}_{n+m'}-h_{\hj} m'\right),
$$
Rescale the interlayer potential as $\bar V(x) = V (\frac{4\pi}{h_1+h_2}x)$ and denote the harmonic mean $h = \frac{2 h_1 h_2}{h_1 + h_2}$, we obtain
$$
E^{\hi}_{\mathrm{inter}} = h_{\hi} \sum_{n=0}^{N_{\hi}-1}\sum_{m' \in \Z}V\left((-1)^{{\hj}}hn + \frac{h}{h_{\hj}}u^{\hi}_n - \frac{h}{h_{\hi}}u^{\hj}_{n+m'} - 2\pi \frac{h}{h_{\hi}} m'\right).
$$

\section{Proof of Corollary \ref{lem:mean_fixing}}
\begin{proof}
Choose $c=c(d)$ so that the transformed means sum to zero:
$$
\left(\tilde{u}^1\right)+\left(\tilde{u}^2\right)=0 .
$$
Since
$$
\left(\tilde{u}^1\right)+\left(\tilde{u}^2\right)=\left(u^1\right)+\left(u^2\right)+h_2 d+\left(h_1+h_2\right) c,
$$
take
$$
c(d):=-\frac{\left(u^1\right)+\left(u^2\right)+h_2 d}{h_1+h_2} .
$$
Then indeed
$$
\left(\tilde{u}^1\right)+\left(\tilde{u}^2\right)=0 .
$$
Substitute this into $\left(\tilde{u}^1\right)$ :
$$
\left(\tilde{u}^1\right)(d)=\left(u^1\right)+h_2 d-h_2 \frac{\left(u^1\right)+\left(u^2\right)+h_2 d}{h_1+h_2}=\frac{h_1\left(u^1\right)-h_2\left(u^2\right)+h_1 h_2 d}{h_1+h_2} .
$$
Choose $\ell_*$ to be the nearest integer to
$$
\frac{h_2\left(u^2\right)-h_1\left(u^1\right)}{h_1 h_2} .
$$
Then
$$
\left|\frac{h_2\left(u^2\right)-h_1\left(u^1\right)}{h_1 h_2}-\ell_*\right| \leq \frac{1}{2},
$$
which gives
$$
\left|\left(\tilde{u}^1\right)\left(\ell_*\right)\right| \leq \frac{h_1 h_2}{2\left(h_1+h_2\right)} .
$$
Since $\left(\tilde{u}^2\right)=-\left(\tilde{u}^1\right)$, the same bound holds for layer 2 . Finally,
$$
\frac{h_1 h_2}{2\left(h_1+h_2\right)}=\frac{1}{4} \frac{2 h_1 h_2}{h_1+h_2}=\frac{h}{4} .
$$
\end{proof}

\section{Proof of Lemma \ref{lem:nf_split}}
\begin{proof}
Note that by \eqref{eq:argument}, we have the estimate
\[
    |\Anm[u]| \geq \pi |m|-2\pi(1+\left\|u^1\right\|_{\infty, 1}+\left\|u^2\right\|_{\infty, 2} ).
\]
We split the sum into a near field and a far field. Let
\[
    M:=\left\lceil 4(1+\left\|u^1\right\|_{\infty, 1}+\left\|u^2\right\|_{\infty, 2} )\right\rceil .
\]
For \(|m|\le M\), we only use the bound \(|G|\le C_V\), then
\[
    \sum_{|m|\le M}|G(\Anm[u])|
    \le
    C_V(2M+1) \leq 11C_V \left(1+\left\|u^1\right\|_{\infty, 1}+\left\|u^2\right\|_{\infty, 2}\right) 
\]
For \(|m|>M\), we have
\[
    \pi|m|-2\pi(1+\left\|u^1\right\|_{\infty, 1}+\left\|u^2\right\|_{\infty, 2})
    \ge
    \frac{\pi}{2}|m|,
\]
and hence for any $n\in I_{\hi}$, since $r \geq 2$,
\[
    \sum_{|m|>M}|G(\Anm[u])|
    \le
    C_V
    \sum_{|m|>M}
    \left(
        1+\frac{\pi}{2}|m|
    \right)^{-r} 
    \le
    2C_V\sum_{m=1}^{\infty}
    \left(
        1+\frac{\pi}{2}m
    \right)^{-r}
    \le 2C_V.
\]
Combining the near-field and far-field estimates gives
\[
    \sum_{m\in\mathbb Z}\sup_{n\in I_{\hi}}|G(\Anm[u])| \leq 13C_V \left(1+\left\|u^1\right\|_{\infty, 1}+\left\|u^2\right\|_{\infty, 2}\right).
\]
\end{proof}

\section{Proof of Lemma \ref{lem:uniform_boundedness}} \label{appendix:lem3_proof}
Following the Sobolev--Morrey embedding as in \cite[\S5.6.2, Thm. 5]{evans2010partial} and applying the 1D finite-difference analogue as in \cite[\S5.10, Ex. 4]{evans2010partial}, we have for any indices $j, k$, 
$$
\left|u^{\hi}_k - u^{\hi}_j\right| \leq \sqrt{2 \pi} \left\|\nabla_{\hi}^{+}u^{\hi}\right\|_{\ell^2_{\hi}} .
$$
Then, since the layer mean lies between the minimum and maximum values,
$$
\left\|u^{\hi}-\left(u^{\hi}\right)\right\|_{\infty, \hi} \leq \sqrt{2 \pi}\left\|\nabla_{\hi}^{+} u^{\hi}\right\|_{\ell_{\hi}^2} .
$$
With the mean-fixing normalization in Corollary \ref{lem:mean_fixing}, we have
$$
\left\|u^1\right\|_{\infty, 1}+\left\|u^2\right\|_{\infty, 2} \leq 2 \sqrt{\pi} \sqrt{G[u]}+\frac{h}{2},
$$
where
$$
G[u]:=\left\|\nabla_{h_1}^{+} u^1\right\|_{\ell_1^2}^2+\left\|\nabla_{h_2}^{+} u^2\right\|_{\ell_2^2}^2 .
$$
Then Lemma \ref{lem:nf_split} gives the interaction lower bound
$$
\sum_{i=1}^2 h_i \sum_{n \in I_{\hi}} \sum_{m \in \Z} V\left(A_{n, m}^{\hi}[u]\right) \geq
-52 \pi C_V(1+\left\|u^1\right\|_{\infty, 1}+\left\|u^2\right\|_{\infty, 2}) .
$$
hence
$$
E_N[u] \geq \frac{1}{2} G[u]-52 \pi C_V\left(1+\frac{h}{2}+2 \sqrt{\pi} \sqrt{G[u]}\right)
$$
Since $E_N[u] \leq C_0 \leq 52 \pi C_V$, it follows that $G[u]$ is uniformly bounded, and one can calculate that the sum of the layerwise supremum norms is bounded by
$$
\left\|u^1\right\|_{\infty, 1}+\left\|u^2\right\|_{\infty, 2} \leq 
\frac{h}{2} + 42\pi \sqrt{C_V} + 416\pi^2 C_V.
$$

\section{Proof of Lemma \ref{lem:EL_lipschitz}}
\begin{proof}
By lemma \ref{lem:nf_split}, applied with \(V'\) in place of \(G\), the
discrete maximum-norm bound on \(u\) implies
\[
\sum_{m\in\mathbb Z}\sup_{n\in I_{\hi}}
\left|V'\left(A^{\hi}_{n,m}[u]\right)\right|
\leq 13C_V \left(1+\left\|u^1\right\|_{\infty, 1}+\left\|u^2\right\|_{\infty, 2}\right).
\]
Therefore, using the Euler--Lagrange equation \eqref{eq:EL_compact},
\[
\|\Delta_{{\hi}}u^{\hi}\|_{\infty, \hi}
=
\|F^{\hi}(u)\|_{\infty, \hi}
\leq 26C_V \left(1+\left\|u^1\right\|_{\infty, 1}+\left\|u^2\right\|_{\infty, 2}\right).
\]
Then w.l.o.g. for any indices \(n,m\) within a period, we have
\[
|\nabla_{\hi}^+u^{\hi}_n-\nabla_{\hi}^+u^{\hi}_m|
\le h_{\hi}|n-m|
\|\Delta_{{\hi}}u^{\hi}\|_{\infty, \hi}
\leq 2\pi \|\Delta_{{\hi}}u^{\hi}\|_{\infty, \hi}.
\]
Moreover, periodicity gives
\[
\sum_{n\in I_{\hi}} \nabla_{\hi}^+u^{\hi}_n
=
\sum_{n\in I_{\hi}}\frac{u^{\hi}_{n+1}-u^{\hi}_n}{h_{\hi}}=0.
\]
Thus, unless every $\nabla_{\hi}^+u^{\hi}_n=0$, for each $n$ there exists some $m$ such that $\nabla_{\hi}^+u^{\hi}_n$ and $\nabla_{\hi}^+u^{\hi}_m$ have opposite signs. Consequently, 
$$
\left|\nabla_{\hi}^+u^{\hi}_n\right| \leq \left|\nabla_{\hi}^+u^{\hi}_n-\nabla_{\hi}^+u^{\hi}_m\right| \leq 2\pi \|\Delta_{{\hi}}u^{\hi}\|_{\infty, \hi}.
$$
Therefore,
$$
\left\|\nabla_i^{+} u^i\right\|_{\infty, i} \leq 2 \pi\left\|\Delta_i u^i\right\|_{\infty, i} \leq 52 \pi C_V\left(1+\left\|u^1\right\|_{\infty, 1}+\left\|u^2\right\|_{\infty, 2}\right) .
$$
Finally, by telescoping and applying Lemma \ref{lem:uniform_boundedness}, we have
$$
\begin{aligned}
|u^{\hi}_{n+m}-u^{\hi}_n|
& \le
\sum_{r=0}^{|m|-1} h_{\hi}
\|\nabla_{\hi}^+u^{\hi}\|_{\infty, \hi}
\leq 52\pi C_V (1+\left\|u^1\right\|_{\infty, 1}+\left\|u^2\right\|_{\infty, 2}) |m| h_{\hi} \\
& \leq 3.3 \times 10^4 \pi^3\left(C_V+C_V^2\right) |m| h.
\end{aligned}
$$
\end{proof}

\section{Proof of Lemma \ref{lem:approx_symmetry}}
\begin{proof}
For simplicity, denote only in this proof
$$
\Lambda_V:=3.3 \times 10^4 \pi^3\left(C_V+C_V^2\right),
$$
$$
U:= \left\|u^1\right\|_{\infty, 1}+\left\|u^2\right\|_{\infty, 2}.
$$
Recall that $A^{\hi}_{n,m}\left[u\right]$ is defined as
\begin{equation*}
    \Anm \left[u\right] = (-1)^{{\hj}} h n + \frac{h}{h_{\hj}} u_{n}^{\hi} - \frac{h}{h_{\hi}} u_{n+m}^{\hj} - 2 \pi \frac{h}{h_{\hi}} m , \quad n \in I_{\hi}.
\end{equation*}
Now for \eqref{eq:energy_localized}, define
\begin{equation*}
    \Bnm \left[u\right] := (-1)^{\hj}h n+u_n^{\hi}-u_n^{\hj}-2 \pi m, \quad n \in I_{\hi}.
\end{equation*}
Then
$$
E^{{\hi}}\left[u\right] - \Eloc^{{\hi}}\left[u\right] = \avsum_{n \in I_{\hi}}\sum_{m \in \Z}\left(V(\Anm)-V(\Bnm)\right).
$$
With $\frac{h_1-h_2}{h_1+h_2} = \frac{h}{4\pi}$, one can estimate
$$
\left|\Anm-\Bnm\right| \leq \frac{h}{4\pi}U
+\left(1+\frac{h}{4\pi}\right)\left|u_{n+m}^{\hj}-u_n^{\hj}\right|+ \frac{h}{2} \left|m\right| .
$$
Note that by Lemma \ref{lem:EL_lipschitz}, for all $n \in I_{\hi}$, $m \in \Z$,
$$
\left|u_{n+m}^{\hi}-u_n^{\hi}\right| \leq \Lambda_V |m| h,
$$  
Then 
\begin{equation} \label{eq:AB_diff}
\left|\Anm-\Bnm\right| \leq h\left[\frac{U}{4 \pi}+\left(\frac{4}{3} \Lambda_V+\frac{1}{2}\right)|m|\right] .
\end{equation}
Fix $n$, by mean value theorem, 
\begin{equation} \label{eq:V_est}
\left|V\left(\Anm\right)-V\left(\Bnm\right)\right| \leq \sup _{0 \leq \theta \leq 1}\left|V'\left(C_{n, m, \theta}^i\right)\right|\left|\Anm-\Bnm\right|,
\end{equation}
where
$$
C_{n, m, \theta}^i:=B_{n, m}^i+\theta\left(A_{n, m}^i-B_{n, m}^i\right), \quad 0 \leq \theta \leq 1.
$$
The same lower-bound argument as in Lemma \ref{lem:nf_split} gives
$$
\left|C_{n, m, \theta}^i\right| \geq \pi|m|-2 \pi(1+U),
$$
uniformly in
$$
i=1,2, \quad n \in I_i, \quad m \in \mathbb{Z}, \quad \theta \in[0,1] .
$$
Choose $M=\left\lceil 4(1+U)\right\rceil$. Now apply Lemma \ref{lem:nf_split} to $V'$, we retain
\begin{equation} \label{eq:Vprime_sum}
\sum_{m \in \mathbb{Z}} \sup _{\substack{n \in I_i \\ 0 \leq \theta \leq 1}}\left|V'\left(C_{n, m, \theta}^i\right)\right| \leq 13 C_V (1+U) .
\end{equation}
For the $|m|$-weighted sum, we retain
\begin{equation} \label{eq:Vprime_msum}
\sum_{m \in \mathbb{Z}}|m| \sup _{\substack{n \in I_i \\ 0 \leq \theta \leq 1}}\left|V'\left(C_{n, m, \theta}^i\right)\right| \leq 31 C_V(1+U)^2 .
\end{equation}
Now we can combine \eqref{eq:AB_diff}, \eqref{eq:V_est}, \eqref{eq:Vprime_sum} and \eqref{eq:Vprime_msum} to obtain
$$
\left|E^i[u]-E_{\text {loc }}^i[u]\right| \leq 2 \pi h\left[\frac{U}{4 \pi}\left(13 C_V (1+U)\right)+\left(\frac{4}{3} \Lambda_V+\frac{1}{2}\right)\left(31 C_V (1+U)^2\right)\right].
$$
To simplify the above estimate, use $U(1+U) \leq (1+U)^2$, plug in the definition of $\Lambda_V$ and relax the constant terms, we obtain
$$
\left|E^i[u]-E_{\mathrm{loc}}^i[u]\right| \leq h C_V(1+U)^2\left[34 \pi+2.8 \times 10^6 \pi^4\left(C_V+C_V^2\right)\right].
$$
Next, Lemma \ref{lem:uniform_boundedness} gives
$$
U \leq \frac{h}{2}+42 \pi \sqrt{C_V}+416 \pi^2 C_V
$$
Substituting this into the previous inequality, we get
$$
\left|E^i[u]-E_{\mathrm{loc}}^i[u]\right| \leq h \mathcal{P}\left(C_V\right).
$$

Now fix $k=(k_1,k_2)\in\Z^2$ and set
\[
U_k:=
\|(\tau_k u)^1\|_{\infty,1}
+
\|(\tau_k u)^2\|_{\infty,2}.
\]
By \eqref{symmetry_localized},
\[
U_k\le U+2\pi(|k_1|+|k_2|).
\]
Moreover, $\tau_k$ does not change discrete differences, so
\[
|(\tau_k u)^i_{n+m}-(\tau_k u)^i_n|
=
|u^i_{n+m}-u^i_n|
\le \Lambda_V |m|h.
\]
Therefore, repeating the estimates above with $U$ replaced by $U_k$ gives, for
each fixed $k$,
\[
\left|E^i[\tau_k u]-E^i_{\rm loc}[\tau_k u]\right|
\le C(k,C_V)h,
\]
where $C(k,C_V)$ is independent of $N$. Since
$E^i_{\rm loc}[\tau_k u]=E^i_{\rm loc}[u]$, the triangle inequality gives
\[
\begin{aligned}
\left|E^i[\tau_k u]-E^i[u]\right|
&\le
\left|E^i[\tau_k u]-E^i_{\rm loc}[\tau_k u]\right|
+
\left|E^i_{\rm loc}[u]-E^i[u]\right|  \\
&\le C(k,C_V)h,
\end{aligned}
\]
enlarging $C(k,C_V)$ if necessary.
\end{proof}

\section{Proof of Lemma \ref{lem:potential_estimate}}
\begin{proof}
Recall the uniform bound for $u_c$
$$
\left\|u_c^1\right\|_{\infty, 1}+\left\|u_c^2\right\|_{\infty, 2} \leq 2 \pi .
$$
and for $u_d$
$$
\left\|u_d^1\right\|_{\infty, 1}+\left\|u_d^2\right\|_{\infty, 2} \leq 
42\pi \sqrt{C_V} + 416\pi^2 C_V + \frac{2\pi}{3}.
$$
Then for any $t \in [0,1]$, 
$$
\begin{aligned}
\left\|z_t^1\right\|_{\infty, 1}+\left\|z_t^2\right\|_{\infty, 2} &\leq (1-t)\left(\left\|u_c^1\right\|_{\infty, 1}+\left\|u_c^2\right\|_{\infty, 2}\right) 
+t\left(\left\|u_d^1\right\|_{\infty, 1}+\left\|u_d^2\right\|_{\infty, 2}\right) \\
& \leq \frac{8\pi}{3}+42 \pi \sqrt{C_V}+416 \pi^2 C_V.
\end{aligned}
$$
We next estimate the \(V''\)-sum uniformly in \(t\). The lower-bound calculation used in the proof of Lemma \ref{lem:nf_split} gives
\[
\left|A_{n,m}^i[z_t]\right| \geq \pi|m|-2\pi\left(1+\|z_t^1\|_{\ell_1^\infty}+\|z_t^2\|_{\ell_2^\infty}\right).
\]
Denote only in this proof 
\[
\mathcal M_V:=\max_{i=1,2}\sup_{N \geq 2}\sup_{n\in I_i}\sum_{m\in\mathbb Z}\sup_{t\in[0,1]}\left|V''\left(A_{n,m}^i[z_t]\right)\right|,
\]
then by the same argument as in Lemma \ref{lem:nf_split}, we have
$$
\mathcal M_V \leq 13 C_V \left(1+\frac{8\pi}{3}+42 \pi \sqrt{C_V}+416 \pi^2 C_V\right)
$$
It remains to prove the force estimate. For each fixed $n$, apply the mean value theorem on \eqref{eq:force_map}, we have  
$$
\left|F^i_n(u_d)-F^i_n(u_c)\right| \leq 2 \sum_{m \in \Z} \sup_{t \in [0,1]} \left|V''\left(A_{n,m}^i\left[z_{t}\right]\right)\right|\left(\frac{h}{h_{\hj}}\left|w^i_n\right|+\frac{h}{h_{\hi}}\left|w^{\hj}_{n+m}\right|\right)
$$
For the term with $w_n^i$, we can directly use the assumption \eqref{Assumption} to bound it by $2 \frac{h}{h_{\hj}} \mathcal M_V \left|w_n^i\right|$. Square the above inequality, and use Cauchy--Schwarz for the $w_{n+m}^{\hj}$ term, we have
$$
\begin{aligned}
& \left|F^i_n(u_d)-F^i_n(u_c)\right|^2  \\
& \leq  8 \left(\frac{h}{h_{\hj}}\right)^2 \mathcal M_V^2 \left|w_n^i\right|^2
 + 8 \left(\frac{h}{h_{\hi}}\right)^2 \mathcal M_V \left(\sum_{m \in \Z}\sup_{t \in [0,1]}\left|V''\left(A_{n,m}^i\left[z_{t}\right]\right)\right| \left|w^{\hj}_{n+m}\right|^2\right).
\end{aligned}
$$
Multiply both sides by $h_{\hi}$ and sum over $n \in I_{\hi}$, we have
$$
\begin{aligned}
& \left\|F^i(u_d)-F^i(u_c)\right\|_{\ell^2_i}^2 \leq  8 \left(\frac{h}{h_{\hj}}\right)^2 \mathcal M_V^2 \left\|w^i\right\|_{\ell^2_i}^2
 \\
& + 8 \left(\frac{h}{h_{\hi}}\right)^2 \mathcal M_V h_{\hi}  \sum_{n \in I_{\hi}} \sum_{m \in \Z}\sup_{t \in [0,1]}\left|V''\left(A_{n,m}^i\left[z_{t}\right]\right)\right| \left|w^{\hj}_{n+m}\right|^2.
\end{aligned}
$$
Note that for fixed $n \in I_{\hi}$, $p \in I_{\hj}$ and for any $q \in \Z$, the terms $\Anm$ are related by
$$
A^{\hi}_{n,p-n+q N_{\hj}} = -A^{\hj}_{p,n-p-q N_{\hi}}.
$$
Take $m = p-n+q N_{\hj}$ for some $q \in \Z$. For fixed $n$, the mapping 
$$
(p, q) \in I_{\hj} \times \Z \longmapsto m:=p-n+q N_{\hj} \in \Z
$$
is a bijection. Therefore, we can rewrite the second term as
$$
8 \left(\frac{h}{h_{\hi}}\right)^2 \mathcal M_V h_{\hi}  \sum_{n \in I_{\hi}} \sum_{p \in I_{\hj}} \sum_{q \in \Z}\sup_{t \in [0,1]}\left|V''\left(A_{n,p-n+q N_{\hj}}^i\left[z_{t}\right]\right)\right| \left|w^{\hj}_{p}\right|^2.
$$
Switch the order of summation, use the above relation between $\Anm$ and the evenness of $V''$ gives
$$
8 \left(\frac{h}{h_{\hi}}\right)^2 \mathcal M_V h_{\hi}  \sum_{p \in I_{\hj}} \sum_{n \in I_{\hi}} \sum_{q \in \Z} \sup_{t \in [0,1]}\left|V''\left(A_{p,n-p-q N_{\hi}}^{\hj}\left[z_{t}\right]\right)\right| \left|w^{\hj}_{p}\right|^2.
$$
Now for each fixed $p \in I_{\hj}$, the mapping
$$
(n, q) \in I_{\hi} \times \Z \longmapsto m':=n-p-q N_{\hi} \in \Z
$$
is a bijection, thus we can rewrite the second term as
$$
8 \left(\frac{h}{h_{\hi}}\right)^2 \mathcal M_V h_{\hi}  \sum_{p \in I_{\hj}} \sum_{m' \in \Z} \sup_{t \in [0,1]}\left|V''\left(A_{p,m'}^{\hj}\left[z_{t}\right]\right)\right| \left|w^{\hj}_{p}\right|^2.
$$
Now we can use the assumption \eqref{Assumption} to bound it by
$$
8 \left(\frac{h}{h_{\hi}}\right)^2 \left(\frac{h_{\hi}}{h_{\hj}}\right) \mathcal M_V^2 \left\|w^{\hj}\right\|_{\ell^2_{\hj}}^2.
$$
Combine the two terms, we have
$$
\left\|F^i(u_d)-F^i(u_c)\right\|_{\ell^2_i}^2 \leq 8 \left(\frac{h}{h_{\hj}}\right)^2 \mathcal M_V^2 \left\|w^i\right\|_{\ell^2_i}^2 + 8 \left(\frac{h}{h_{\hi}}\right)^2 \left(\frac{h_{\hi}}{h_{\hj}}\right) \mathcal M_V^2 \left\|w^{\hj}\right\|_{\ell^2_{\hj}}^2.
$$
Combine the two layers estimates,
$$
\left\|F\left(u_d\right)-F\left(u_c\right)\right\|_{\ell_h^2}^2 \leq 32 \mathcal{M}_V^2\left(\left\|w^1\right\|_{\ell^2_1}^2+\left\|w^2\right\|_{\ell^2_2}^2\right) .
$$
Therefore,
$$
\left\|F\left(u_d\right)-F\left(u_c\right)\right\|_{\ell_h^2} \leq 4 \sqrt{2} \mathcal{M}_V\left\|u_d-u_c\right\|_{\ell_h^2}.
$$
Set \(L_F:=4\sqrt2\,\mathcal M_V\). This constant is bounded above by a function depending only on \(C_V\). In particular, we have the upper estimate
$$
L_F \leq 2.2 \times 10^4 \sqrt{2} \pi^2\left(C_V+C_V^2\right) .
$$
\end{proof}

\bibliographystyle{amsplain} 
\bibliography{references}

\providecommand{\bysame}{\leavevmode\hbox to3em{\hrulefill}\thinspace}
\providecommand{\MR}{\relax\ifhmode\unskip\space\fi MR }
\providecommand{\MRhref}[2]{%
  \href{http://www.ams.org/mathscinet-getitem?mr=#1}{#2}
}
\providecommand{\href}[2]{#2}
\begin{thebibliography}{10}

\bibitem{10.1093/acprof:oso/9780198507840.001.0001}
Andrea Braides, \emph{Gamma-convergence for beginners}, Oxford University
  Press, 07 2002.

\bibitem{BraidesGelli2006}
Andrea Braides and Maria~Stella Gelli, \emph{From discrete systems to
  continuous variational problems: an introduction}, Topics on Concentration
  Phenomena and Problems with Multiple Scales (Andrea Braides and Valeria
  Chiad{\`o}~Piat, eds.), Lecture Notes of the Unione Matematica Italiana,
  vol.~2, Springer, Berlin, Heidelberg, 2006, pp.~3--77.

\bibitem{Cao2018Superconductivity}
Yuan Cao, Valla Fatemi, Shiang Fang, Kenji Watanabe, Takashi Taniguchi,
  Efthimios Kaxiras, and Pablo Jarillo-Herrero, \emph{Unconventional
  superconductivity in magic-angle graphene superlattices}, Nature \textbf{556}
  (2018), 43--50.

\bibitem{CarrEtAl2018Relaxation}
Stephen Carr, Daniel Massatt, Steven~B. Torrisi, Paul Cazeaux, Mitchell Luskin,
  and Efthimios Kaxiras, \emph{Relaxation and domain formation in
  incommensurate two-dimensional heterostructures}, Physical Review B
  \textbf{98} (2018), 224102.

\bibitem{CazeauxEtAl2023DomainWalls}
Paul Cazeaux, Drake Clark, Rebecca Engelke, Philip Kim, and Mitchell Luskin,
  \emph{Relaxation and domain wall structure of bilayer moir{\'e} systems},
  Journal of Elasticity \textbf{154} (2023), 443--466.

\bibitem{CazeauxLuskinMassatt2020}
Paul Cazeaux, Mitchell Luskin, and Daniel Massatt, \emph{Energy minimization of
  two dimensional incommensurate heterostructures}, Archive for Rational
  Mechanics and Analysis \textbf{235} (2020), 1289--1325.

\bibitem{Cazeaux2017}
Paul Cazeaux, Mitchell Luskin, and Ellad~B. Tadmor, \emph{Analysis of rippling
  in incommensurate one-dimensional coupled chains}, Multiscale Modeling and
  Simulation \textbf{15} (2017), 56--73.

\bibitem{2016Cazeaux}
{Cazeaux, Paul} and {Luskin, Mitchell}, \emph{Cauchy–born strain energy
  density for coupled incommensurate elastic chains}, ESAIM: M2AN \textbf{52}
  (2018), no.~2, 729--749.

\bibitem{dacorogna_direct}
Bernard Dacorogna, \emph{Direct methods in the calculus of variations}, 2nd
  ed., Applied mathematical sciences ; vol. 78, Springer, New York, 2008.

\bibitem{Dai2016MoireTwist}
Shuyang Dai, Yang Xiang, and David~J. Srolovitz, \emph{Twisted bilayer
  graphene: Moir{\'e} with a twist}, Nano Letters \textbf{16} (2016), no.~9,
  5923--5927.

\bibitem{Dang1992}
Ha~Dang, Paul~C. Fife, and L.~A. Peletier, \emph{Saddle solutions of the
  bistable diffusion equation}, Zeitschrift für angewandte Mathematik und
  Physik ZAMP \textbf{43} (1992), no.~6, 984--998.

\bibitem{PhysRevE.96.033003}
Malena~I. Espa\~nol, Dmitry Golovaty, and J.~Patrick Wilber,
  \emph{Discrete-to-continuum modeling of weakly interacting incommensurate
  chains}, Phys. Rev. E \textbf{96} (2017), 033003.

\bibitem{Espanol}
Malena~I. Espa{\~n}ol, Dmitry Golovaty, and J.~Patrick Wilber,
  \emph{Discrete-to-continuum modelling of weakly interacting incommensurate
  two-dimensional lattices}, Proceedings of the Royal Society A: Mathematical,
  Physical and Engineering Sciences \textbf{474} (2018), no.~2209, 20170612.

\bibitem{evans2010partial}
Lawrence~C. Evans, \emph{Partial differential equations}, 2nd ed., vol.~19,
  American Mathematical Society, Providence, RI, 2010.

\bibitem{golovaty2025hierarchyscalesmodelingweakly}
Dmitry Golovaty and J.~Patrick Wilber, \emph{On the hierarchy of scales in
  modeling of weakly interacting chains of atoms}, 2025.

\bibitem{NamKoshino2017Relaxation}
Nguyen N.~T. Nam and Mikito Koshino, \emph{Lattice relaxation and energy band
  modulation in twisted bilayer graphene}, Physical Review B \textbf{96}
  (2017), 075311, See also Erratum, Phys. Rev. B 101, 099901 (2020).

\bibitem{Park2023FQAH}
Heonjoon Park, Jiaqi Cai, Eric Anderson, Yinong Zhang, Jiayi Zhu, Xiaoyu Liu,
  Chong Wang, William Holtzmann, Chaowei Hu, Zhaoyu Liu, Takashi Taniguchi,
  Kenji Watanabe, Jiun-Haw Chu, Ting Cao, Liang Fu, Wang Yao, Cui-Zu Chang,
  David Cobden, Di~Xiao, and Xiaodong Xu, \emph{Observation of fractionally
  quantized anomalous hall effect}, Nature \textbf{622} (2023), 74--79.

\bibitem{tu2026relaxationincommensuratestructuresquantum}
Mengfan Tu, Huajie Chen, and Daniel Massatt, \emph{Relaxation of incommensurate
  structures via quantum models}, 2026.

\bibitem{Wang_2025}
Yangshuai Wang, Drake Clark, Sambit Das, Ziyan Zhu, Daniel Massatt, Vikram
  Gavini, Mitchell Luskin, and Christoph Ortner, \emph{An atomic cluster
  expansion potential for twisted multilayer graphene}, Machine Learning:
  Science and Technology \textbf{6} (2025), no.~4, 045040.

\bibitem{Yoo2019Reconstruction}
Hyobin Yoo, Rebecca Engelke, Stephen Carr, Shiang Fang, Kuan Zhang, Paul
  Cazeaux, Suk~Hyun Sung, Robert Hovden, Adam~W. Tsen, Takashi Taniguchi, Kenji
  Watanabe, Gyu-Chul Yi, Miyoung Kim, Mitchell Luskin, Ellad~B. Tadmor,
  Efthimios Kaxiras, and Philip Kim, \emph{Atomic and electronic reconstruction
  at the van der waals interface in twisted bilayer graphene}, Nature Materials
  \textbf{18} (2019), no.~5, 448--453.

\bibitem{jingzhi2025formaljustificationcontinuumrelaxation}
Jingzhi~(David) Zhou and Alexander~B. Watson, \emph{Formal justification of a
  continuum relaxation model for one-dimensional moir\'e materials},
  arxiv.org/abs/2412.08854 (2025).

\end{thebibliography}
\end{document}